\newtheorem{theorem}{Theorem}[section]   
\newtheorem{lemma}[theorem]{Lemma}   
\newtheorem{proposition}[theorem]{Proposition}
\newtheorem{corollary}[theorem]{Corollary}
\theoremstyle{definition}
\newtheorem{definition}[theorem]{Definition} 
\newtheorem{remark}{Remark}[section]
\renewcommand{\leq}{\leqslant}  
\renewcommand{\geq}{\geqslant}
\newcommand{\Aut}{{\textrm{\rm Aut}}}
\newcommand{\Cay}{{\textrm{\rm SCG}}}
\newcommand{\ourset}{\Upsilon}
\newcolumntype{?}{!{\vrule width 1.5pt}}
\newcommand{\mydes}[5]{

  \smallskip
  \hspace*{-1.3ex}\begin{tabular}{@{}c?lp{12.5cm}}
    \multicolumn{3}{l}{{\textsc{#1}}}\\
   \hspace*{1ex} & {\bf #2} & #3\\
    & {\bf #4} & #5\\
  \end{tabular}
  
  \smallskip

  }
\begin{document}

\title{Group isomorphism is nearly-linear time for most orders}
 
\author{Heiko Dietrich}
\email{heiko.dietrich@monash.edu}
\address{%
   School of Mathematics, Monash University, Australia
 }

\author{James B.\ Wilson}
\email{james.wilson@colostate.edu}
\address{Department of Mathematics, Colorado State University, Colorado, USA}
   
\begin{abstract}
  We show that there is a dense set $\ourset\subseteq \mathbb{N}$ of group
  orders and a constant $c$ such that for every $n\in \ourset$ we can decide in
  time $O(n^2(\log n)^c)$ whether two $n\times n$ multiplication tables describe
  isomorphic groups of order $n$.  This improves significantly over the general
  $n^{O(\log n)}$-time complexity and shows that group isomorphism can be tested
  efficiently for almost all group orders $n$. We also show that in time $O(n^2
  (\log n)^c)$ it can be decided whether an $n\times n$ multiplication table
  describes a group; this improves over the known $O(n^3)$ complexity. Our
  complexities are calculated for a deterministic multi-tape Turing machine
  model.  We give the implications to a RAM model in the promise hierarchy as well.
\end{abstract}

\keywords{group isomorphism, complexity}

\maketitle

\newcommand{\myl}{{{\text{\rm\tiny big}}}}

\newcommand{\ppi}{{\pi^{{\rm\tiny si}}}}
\newcommand{\bppi}{{\pi^{{\rm\tiny lsi}}}}

\section{Introduction}
\label{sec:intro}
  
\noindent Given a natural number $n$, there are many structures that can be recorded by an
$n\times n$ table $T$ taking values $T_{ij}$ in $[n]=\{1,\ldots,n\}$.
Isomorphisms of these tables are permutations $\sigma$ on $[n]$ with
$T_{\sigma(i)\sigma(j)}=\sigma(T_{ij})$ for all $i,j\in[n]$.    It is convenient
to assign these tables either a geometric or algebraic interpretation.  A
geometric view treats these as edge colored directed graphs or as the relations
of an incidence structure.  We will consider  the algebraic interpretation where
the table describes a binary product $*\colon [n]\times [n]\to [n]$.

An upper bound complexity to decide isomorphism can be given by testing all $n!$
permutations.  Better timings arise when we consider subfamilies of structures,
for example, by imposing equational laws on the product such as associativity
$a*(b*c)=(a*b)*c$ (i.e.\ \emph{semigroups}) or the existence of left and right
fractions (i.e.\ \emph{quasigroups} or \emph{latin squares}). Booth
\cite[p.~132]{miller79}\linebreak observed that the complexity of isomorphism
testing of semigroups is polynomial-time equivalent to the complexity of graph
isomorphism. At the time, that complexity was subexponential, but it has since
been shown by Babai \cite{Babai:quasi} to be in quasi-polynomial time with  a
highly inventive algorithm.  Meanwhile Miller \cite{miller78,miller79} observed
that the complexity of quasigroup isomorphism is in quasi-polynomial time
$n^{O(\log n)}$ through an almost brute-force algorithm: since quasigroups of
order $n$ are generated by $\log_2 n$ elements, a brute-force comparison of all
$\lceil \log_2 n\rceil$-tuples either finds an isomorphism between two
quasigroups or determines that no isomorphism exists.

\enlargethispage{3ex}

An intriguing bottleneck to further improvement has been the case of
\emph{groups} that have associative products with an identity and left and right
inverses. Because these are quasigroups, they have a brute-force isomorphism
test with complexity $n^{O(\log n)}$;  Miller credited Tarjan for the complexity 
of group isomorphism.
Guralnick and Lucchini \cite[Theorem 16.6]{BNV:enum} showed
independently  that every finite group of order $n$ can be generated by at most
$d(n)$ elements and $d(n) \leq \mu(n)+1$, where $\mu(n)$ is the largest exponent
of a prime power divisor $p^{\mu(n)}$ of $n$. Thus, the complexity of
brute-force group isomorphism testing is more accurately described as
$n^{O(\mu(n))}$. Since  $\mu(n)\in \Theta(\log n)$ when $n=2^{\ell}m$ with $m\in O(1)$,
this does not improve on the generic $n^{O(\log n)}$ bound.  Group isomorphism
testing seems to be a leading bottleneck to improving the complexity of graph
isomorphism, see Babai \cite[Section 13.2]{Babai:quasi}. Even so, we prove here
that  for \emph{most} orders, group isomorphism is in nearly-linear time
compared to the input size; this also shows that groups of general orders are not a
bottleneck to graph isomorphism.

\subsection{Current state of  group isomorphism}
Surprisingly, the brute-force $n^{O(\log n)}$ complexity of group isomorphism
has been resilient.  Progress has fragmented into work on numerous
subclasses~$\mathfrak{X}$ of groups;  the precise problem studied today
is:

\medskip

\mydes{$\mathfrak{X}$-GroupIso}
  {Given}{
    a pair $(T,T')$ of $n\times n$ tables with entries in 
    $[n]$ representing groups in $\mathfrak{X}$,
  }
  {Decide}{if the groups are isomorphic.}

  \medskip

\noindent We follow the convention that an $n\times n$ table with
entries in $[n]$ corresponds to the multiplication table of a binary product
where the rows and columns are both labelled by $1,2,\ldots,n$. Moreover, for
a group multiplication we require that the identity element is denoted
``$1$'', that is, the first row and first column must both be $1,2,\ldots,n$;
if this is not the case, then we reject the input.

Grochow-Qiao \cite{GQ} give a detailed survey of recent progress on group
isomorphism; here we summarize a few results related to our setting. Iliopoulos
\cite{Ili}, Karagiorgos-Poulakis \cite{KP}, Vikas \cite{vikas}, and Kavitha
\cite{kavitha} progressively improved the complexity for the class
$\mathfrak{A}$ of abelian groups (where the product satisfies $a*b=b*a$ for all
$a,b$), resulting in a linear-time algorithm for
$\mathfrak{A}$-\textsc{GroupIso} in a RAM model (more on this below).
Wagner-Rosenbaum \cite{rosenbaum} gave an $n^{0.25 \log n+O(1)}$ time algorithm
for the class $\mathfrak{N}_p$ of groups of order a power of a prime $p$, and
later generalized this to the class of solvable groups.  Li-Qiao \cite{Li-Qiao}
proved an average run time of $n^{O(1)}$ for an essentially dense subclass
$\mathfrak{N}_{p,2}\subset \mathfrak{N}_p$. Babai-Codenotti-Qiao~\cite{BCQ}
proved an $n^{O(1)}$ bound for the class $\mathfrak{T}$ of groups with no
nontrivial abelian normal subgroups.  Das-Sharma \cite{das} described a
nearly-linear time algorithm for the class of Hamiltonian groups, again in a RAM
model.

Other research  combines results for separate
classes by considering isomorphism between groups that decompose into a subgroup
in class $\mathfrak{X}$ and a quotient in class $\mathfrak{Y}$,  see also
Section~\ref{sec:prel}; we call this the
\emph{$(\mathfrak{X},\mathfrak{Y})$-extension problem}.  Le Gall
\cite{LeGall} studied $(\mathfrak{A},\mathfrak{C})$-extensions,
where $\mathfrak{C}$ consists of cyclic groups. Grochow-Qiao \cite{GQ}
considered $(\mathfrak{A},\mathfrak{T})$-extensions, and outlined a general
framework for solving  extension problems.

A further class of algorithms considers terse input models, such as black-box
models or groups of matrices or permutations; we refer to Seress
\cite[Section~2]{Seress} for details of those models. In this
format, groups can be exponentially larger than the data it takes to specify the
group.  Using this model, the second author proved an $(\log
n)^{O(1)}$-time algorithm for subgroups and quotients of finite Heisenberg
groups, and further variations in collaborations with Lewis and
Brooksbank-Maglione; see \cite{Wilson:profile} and the references therein.
Recently, in \cite{PART2} the authors proved a polynomial-time isomorphism test
for permutation groups of square-free and cube-free orders. These examples
demonstrate that input models may have an outsized influence on the complexity
of group isomorphism.

Some of the motivation of this and earlier work \cite{PART2} has been the
observation that, in contrast to graph isomorphism, the difficulty of group
isomorphism is influenced by the prime power factorization of the group orders
$n$.  For example, if $n=2^e\pm 1$ is a prime, then there is exactly one
isomorphism type of groups of prime order $n$ and isomorphism can be tested by
comparing orders.  Yet, there are $n^{2e^2/27-O(e)}$ isomorphism types of groups
of prime power order $n\mp 1=2^e$, see \cite[p.~23]{BNV:enum}. As of today,
isomorphism testing of groups of order $2^e$ has the worst-case complexity.
  
\subsection{Main results}
The main result of this paper is a proof that group isomorphism can be tested
efficiently for \emph{almost all} group orders $n$ in time $O(n^2 (\log n)^c)$
for some constant $c$, if the groups are input  by their Cayley tables, that is,
by $n\times n$ tables describing their  multiplication maps $[n]\times [n]\to
[n]$.   To make ``almost all'' specific, we define the \emph{density} of a set
$\Omega\subseteq\mathbb{N}$ to be the limit $\delta(\Omega)=\lim_{n\to
\infty}|\Omega\cap [n]|/n$; the set $\Omega$ is \emph{dense} if
$\delta(\Omega)=1$. By abuse of notation, $\Omega$-\textsc{GroupIso} denotes the
isomorphism problem for the class of groups whose orders lie in $\Omega$.
All our complexities are stated for deterministic multi-tape Turing machines;
we give details in Section~\ref{sec:prel}.

\begin{theorem}\label{thm:main} 
  There is a dense subset  $\Upsilon\subset\mathbb{N}$ and a deterministic
  Turing machine that decides $\Upsilon$-\textsc{GroupIso} for $n\in\Upsilon$ in time 
  $O(n^2(\log n)^c)$ for some constant $c$. 
\end{theorem}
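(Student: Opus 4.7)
The strategy is to identify a density-$1$ set $\Upsilon$ of orders whose arithmetic shape forces any group of such order into a transparent structural decomposition. A natural choice is
\[ \Upsilon = \{n\in\mathbb{N} : \text{the powerful part of } n \text{ is at most } (\log n)^{c_0}\}, \]
where the \emph{powerful part} of $n$ is $\prod_{p^2\mid n} p^{v_p(n)}$; equivalently, $n=ab$ with $\gcd(a,b)=1$, $a$ squarefree, and $b$ polylogarithmic. Density $1$ follows because $\sum 1/b$ over powerful $b$ converges (Golomb), so the density of $n$ whose powerful part exceeds any slowly-growing threshold $B(n)\to\infty$ tends to zero.

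For $n\in\Upsilon$ and a Cayley table $T$ of a group $G$ of order $n$, the plan is to recover from $T$ a canonical decomposition separating the large, structurally rigid squarefree-order behaviour from a small residual piece of order dividing the polylog-bounded $b$. Every prime $p\mid a$ appears in $|G|$ with multiplicity one, so each such Sylow $p$-subgroup is cyclic of prime order. Together with H\"older's classical structure theorem for groups of squarefree order and the machinery of \cite{PART2}, this lets us canonicalise the squarefree part in $O(a^2\,\mathrm{polylog}(a))$ bit-operations. Any non-abelian simple composition factor of $G$ must divide $b$ -- since no non-abelian finite simple group has squarefree order (Feit--Thompson together with a Sylow-$2$ count forces $4\mid|S|$) -- and hence lives inside a subgroup or quotient of polylog size, which is handled by brute force along with all induced actions on the squarefree part, complements, and extension cocycles in $\mathrm{polylog}(n)$ time. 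Comparing two inputs then amounts to matching canonical forms of the squarefree pieces and checking equality of the polylog-sized residual data up to the appropriate automorphism equivalence, all within $O(n^2(\log n)^c)$ bit-operations on a multi-tape Turing machine.

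The principal obstacle is executing the structural decomposition from the raw Cayley table within the nearly-linear budget in a genuinely canonical way. Element orders can be read off in $O(n^2\log n)$ bit-operations via repeated squaring using table look-ups, but isolating normal subgroups (such as the solvable radical or the $\pi(a)$-part of the Fitting series) and canonicalising complement/action data up to the relevant conjugation equivalence requires adapting the permutation-group framework of \cite{PART2} to the Cayley-table setting. A secondary, but pervasive, technical difficulty is that random access into $T$ is not free on a multi-tape Turing machine, so each subroutine must be staged in a streaming-friendly way to avoid accumulating extraneous log factors beyond the allotted $(\log n)^c$; integer factorisation of the single number $n$ is cheap in comparison and is absorbed easily.
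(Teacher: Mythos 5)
Your proposal fails on the central structural claim, and the failure is exactly the issue the paper's more careful definition of $\Upsilon$ is designed to avoid.

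You define $\Upsilon$ so that $n=ab$ with $a$ squarefree and $b$ (the powerful part) polylogarithmic, and you then assert that any non-abelian simple composition factor $S$ of a group $G$ of order $n$ must have $|S|\mid b$, "since no non-abelian finite simple group has squarefree order." That inference does not follow. Feit--Thompson plus a Sylow-$2$ count gives $4\mid|S|$, i.e.\ $|S|$ is not squarefree, but it does not give $|S|\mid b$. The order $|S|$ can have powerful part exactly $4$ while its squarefree part is enormous; for instance $|\mathrm{PSL}(2,p)|=p(p-1)(p+1)/2$ with $p\equiv 3\pmod 8$ has $2$-part equal to $4$, and for many $p$ the remaining factor is squarefree, so $|S|$ can be comparable to $n$ itself while its powerful part is bounded. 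Thus the "hard" simple piece is \emph{not} confined to a polylog-sized subgroup or quotient, and the brute-force step you propose for it is not affordable. The same problem infects the solvable case: having each Sylow $p$-subgroup for $p\mid a$ cyclic of order $p$ does not make any of them normal (Hölder's metacyclic structure applies to groups whose \emph{entire} order is squarefree, not to arbitrary groups whose order has a large squarefree part), so there is no canonical "squarefree-part subgroup" of order $a$ to isolate, and for nonsolvable $G$ a Hall subgroup of order $a$ need not even exist.

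The paper's $\Upsilon$ (Definition~\ref{def:ups}) imposes exactly the extra arithmetic conditions your set omits: the large prime divisors must be \emph{isolated} (so $p\nmid q^k-1$ for all prime-power divisors $q^k\mid n$, which via Sylow counting forces normal Sylow $p$-subgroups in the solvable case) and \emph{strongly isolated} (so $p$ divides no non-abelian simple group of order dividing $n$, which via the Babai--Beals filtration and CFSG pushes the Sylow $p$-subgroup into $O_\infty(G)$ in the nonsolvable case). With those conditions, Theorem~\ref{thm:split} produces a genuine normal cyclic Hall subgroup $B$ with a small complement $H$, and Lemma~\ref{lemcomp} reduces isomorphism to matching $(H,B,\theta)$ data; that is what makes the $\tilde O(n^2)$ bound go through. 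Your $\Upsilon$ is indeed dense, but it is strictly larger than what the structure theory supports, and the group-theoretic decomposition you invoke is not available on it.
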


We provide a proof in Section~\ref{appthm:main}; the set $\ourset$ is specified
in Definition \ref{def:ups} and motivated by the Hardy-Ramanujan Theorem
\cite[Section 8]{hardy} and number theory results of Erd\H{o}s-P\'alfy
\cite{ErdosPalfy}. Since every multiplication table $[n]\times [n]\to [n]$ can
be encoded and recognized from a binary string of length $\Theta(n^2\log n)$,
the algorithm of Theorem~\ref{thm:main} is nearly-linear time in the input size.
The dense set $\Upsilon$ is specified in Definition~\ref{def:ups}; here we
mention that we can determine in time $O(n^2 (\log n)^c)$ whether
$n\in\Upsilon$, see Remark \ref{remUps}, and the complexity for brute-force
isomorphism testing of groups of order $n\in \Upsilon$ is $n^{O(\log\log n)}$.
Because of this, we would have been content with a polynomial-time bound; being
able to prove nearly-linear time bound was a surprise. 

 Our set $\Upsilon$ excludes an important but difficult class of group orders,
 specifically orders that have a large power of a prime as a divisor.
 Theorem~\ref{thm:main} therefore  goes some way towards confirming the
 expectation that groups of prime power order are the essential bottleneck to
 group isomorphism testing. Indeed, examples such as provided in
 \cite{Wilson:profile} show that large numbers of groups of prime power order
 can appear identical and yet be pairwise non-isomorphic. In fact, known
 estimates on the proportions of groups show that most isomorphism types of
 groups accumulate around orders with large prime powers, see 
 \cite[pp.\ 1--2]{BNV:enum}. So our Theorem~\ref{thm:main} should not be misunderstood as
 saying that group isomorphism is efficient on most groups, just on most orders.
 Even so, we see in results like Li-Qiao \cite{Li-Qiao} and
 Theorem~\ref{thm:main} the beginnings of an approach to show that group
 isomorphism is polynomial-time on average, and we encourage work in this
 direction.

The solutions of $\mathfrak{X}$-\textsc{GroupIso} cited so far deal with the
problem in the promise polynomial hierarchy \cite{promise} where one promises
that inputs are known to be groups and that they lie in $\mathfrak{X}$.  To
relate those solutions to the usual deterministic polynomial-time hierarchy
forces us to consider the complexity of the associated membership problem:

\medskip 

\mydes{$\mathfrak{X}$-Group}
  {Given}{
    a binary string $T$,
  }
  {Decide}{if $T$ encodes the Cayley table of a group contained in $\mathfrak{X}$.}

\medskip
  
\noindent 
Current methods in the literature seem to require $O(n^3)$ steps to
verify that a binary product on $[n]$ is associative, see \cite[Chapter
2]{latin}.  Here we present a method that solves
$\mathfrak{G}$-\textsc{Group} for the class $\mathfrak{G}$ of all finite groups
in time $O(n^2(\log n)^d)$ for some constant $d$. We note that
Rajagopalan-Schulman \cite[Theorem 5.2]{ids} provide an $O(n^2\log n)$ algorithm
for this task, but they cost the binary operation as $O(1)$, which gives an
upper bound of $O(n^4(\log n)^2)$ on a Turing machine~model.
\enlargethispage{2ex}

\begin{theorem}\label{thm:isgrp} There is a deterministic Turing machine that
  decides in time ${O}(n^2(\log n)^d)$ for some constant $d$ whether an $n\times
  n$ table with entries in $[n]$ describes a group and, if so, returns a
  homomorphism $[n]\to {\rm Sym}_n$ into the group ${\rm Sym}_n$ of permutations
  on $[n]$.
\end{theorem}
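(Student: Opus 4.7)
The strategy is to avoid Light's $\Theta(n^3)$ associativity test by reducing associativity to a test on a small generating set, using the left regular representation. The algorithm has three phases.

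\emph{Phase 1 (loop structure).} Scan the first row and first column of $T$ to confirm both equal $1,2,\ldots,n$, establishing $1$ as a two-sided identity. Then confirm that each row and each column is a permutation of $[n]$ via a single pass with an auxiliary bit-vector; on a multi-tape Turing machine this costs $O(n^2\log n)$ since each entry has bit-length $O(\log n)$. Two-sided inverses are then automatic: for every $i$, locate the unique $j$ with $T_{i,j}=1$ in the $i$-th row and verify $T_{j,i}=1$. After Phase~1 the table encodes a loop with identity $1$ and two-sided inverses.

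\emph{Phase 2 (small generating set).} Build $\{1\}=H_0\subsetneq H_1\subsetneq\cdots\subsetneq H_k=[n]$ and $S=\{s_1,\ldots,s_k\}$ greedily: maintain a bit-vector for $H_j$, pick any $s_{j+1}\in [n]\setminus H_j$, and close $H_j\cup\{s_{j+1}\}$ under $*$ by iteratively adjoining products of elements already present. If $T$ truly encodes a group then each $H_j$ is a subgroup, so Lagrange forces $|H_{j+1}|\ge 2|H_j|$ and $k\le\lceil\log_2 n\rceil$; we cap $k$ at this bound and reject otherwise. Closing $H_j$ to $H_{j+1}$ costs $O(|H_{j+1}|^2\log n)$ Turing-machine steps, and geometric growth sums this to $O(n^2\log n)$.

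\emph{Phase 3 (associativity via the regular representation).} For $a\in[n]$ let $\lambda_a\colon [n]\to[n]$ denote the $a$-th row of $T$, so $\lambda_a(x)=a*x$; by Phase~1 each $\lambda_a$ is a permutation. For every $s\in S$ and every $a\in[n]$, verify
\[
\lambda_s\circ \lambda_a \;=\; \lambda_{s*a},
\]
an equality of two permutations on $[n]$ testable in $O(n\log n)$ Turing-machine steps by composing entrywise and comparing with the row indexed by $s*a$. There are $|S|\cdot n=O(n\log n)$ such checks, contributing $O(n^2(\log n)^2)$. The supporting lemma is: if all these checks pass then $*$ is associative. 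To see this, let $A=\{a\in[n]:\lambda_a\lambda_b=\lambda_{a*b}\text{ for all }b\in[n]\}$; Phase~3 places $S\subseteq A$. If $a_1,a_2\in A$, then for any $b$,
\[
\lambda_{a_1*a_2}\lambda_b \;=\; \lambda_{a_1}\lambda_{a_2}\lambda_b \;=\; \lambda_{a_1}\lambda_{a_2*b} \;=\; \lambda_{a_1*(a_2*b)},
\]
and evaluating at $1$ (using $u*1=u$ from Phase~1) gives $(a_1*a_2)*b = a_1*(a_2*b)$, hence $\lambda_{a_1*a_2}\lambda_b=\lambda_{(a_1*a_2)*b}$, so $a_1*a_2\in A$. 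Thus $A$ contains the $*$-closure of $S$, which is $[n]$ by Phase~2, and we conclude that $*$ is associative. The table $T$, viewed row by row, is then the homomorphism $\lambda\colon[n]\to\mathrm{Sym}_n$ we must return.

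\emph{Main obstacle.} The crucial point is the propagation step in Phase~3: one must verify $\lambda_s\lambda_a=\lambda_{s*a}$ for \emph{every} $a\in[n]$, not only $a\in S$, because the closure argument instantiates the hypothesis at $a=a_2*b$ for $b$ ranging over $[n]$. This prevents the check from collapsing to $|S|^2$ work but still gives the target $O(n\log n)$ checks, each of permutation-size cost. Phases~1 and~2 are more routine, though the greedy closure analysis leans on Lagrange's theorem, which is only guaranteed once the answer is known to be ``group''; capping $k$ at $\lceil\log_2 n\rceil$ preempts bad behaviour on non-group inputs.
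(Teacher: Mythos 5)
Your mathematical argument is correct and takes a genuinely different route from the paper's proof.  Both you and the paper reduce the problem to verifying that the left-regular representation $\lambda\colon a\mapsto\lambda_a$ is a homomorphism onto a regular permutation group of order $n$.  The paper finds $O(\log n)$ rows generating a transitive subgroup $\Lambda(S)\leq\mathrm{Sym}_n$, builds a \emph{shallow} Schreier tree (via \cite[Lemma~4.4.2]{Seress}), verifies triviality of the point stabiliser $\Lambda(S)_1$ by checking $\tilde O(n)$ Schreier relations, and finally checks $\lambda_x$ equals the stored transversal element for every $x\in[n]$.  You instead run a truncated Light's test: verify $\lambda_s\lambda_a=\lambda_{s*a}$ only for $s$ in an $O(\log n)$-size generating set $S$ and all $a\in[n]$, and then argue that the set $A=\{a:\lambda_a\lambda_b=\lambda_{a*b}\ \forall b\}$ contains $S$ and is closed under $*$, hence equals $[n]$.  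That closure lemma is correct, self-contained, and arguably more elementary than the paper's detour through permutation-group machinery.  It is a nice alternative.

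Where the proposal falls short is the Turing-machine cost accounting, which is precisely the part of the theorem that is nontrivial (the $O(n^3)$ barrier the result is meant to break already follows from Light's test together with Miller's generator-enumeration argument).  Your Phase-3 claim---that each equality $\lambda_s\lambda_a=\lambda_{s*a}$ is ``testable in $O(n\log n)$ Turing-machine steps by composing entrywise and comparing with the row indexed by $s*a$''---is a RAM-model estimate, not a TM one.  On a multi-tape TM there is no $O(1)$ random access: composing $\lambda_s\circ\lambda_a$ ``entrywise'' requires looking up $\lambda_s$ at the arbitrary positions $\lambda_a(1),\dots,\lambda_a(n)$, and comparing against ``the row indexed by $s*a$'' requires seeking to an arbitrary row of the $\Theta(n^2\log n)$-length input tape; each such seek alone costs $\Theta(n^2\log n)$.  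Done naively your Phase~3 is $\tilde O(n^3)$ or worse.  The paper resolves exactly this obstacle by (i) copying the $O(\log n)$ generator rows onto a short auxiliary tape of length $\tilde O(n)$, (ii) using a sorting-based permutation product that runs in $\tilde O(n)$ time (the displayed ``sort $[1^a,\ldots,n^a]$ and keep track of labels'' trick), and (iii) organising comparisons as batched sorts rather than per-element random accesses.  The same devices would likely repair your Phase~3 to $\tilde O(n^2)$---process one $s$ at a time, compute all $\lambda_s\lambda_a$, sort by $s*a$, then compare against the table in a single sequential pass---but none of this appears in your write-up, and it is the crux of the theorem.  The Phase-2 estimate ``$O(|H_{j+1}|^2\log n)$ TM steps'' per stage has the same flaw: a single product $a*b$ is not an $O(\log n)$-step TM operation when $T$ is an $\Theta(n^2\log n)$-bit table, so the per-stage bound should be $\tilde O(n^2)$ (one table scan per closure round, with a cap on the number of rounds to guard against non-group inputs), not $\tilde O(|H_{j+1}|^2)$.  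In short: the correctness argument is sound and elegant, but the complexity claim that constitutes the content of the theorem is not established as written.
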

Recall our assumption that the multiplication table arising from the input table
has labels $1,2,\ldots, n$ and that ``$1$'' must be the identity, that is, the
first row and column of the table must be $1,2,\ldots,n$; if not, then the input
is rejected as it does not represent a group.   Relaxing that assumption leads
to comparing multiplication tables with independent permutations of the rows,
columns, and entries producing what is known as an isotopy (or isotopism)
instead of an isomorphism.  We consider only isomorphism. 

We prove Theorem~\ref{thm:isgrp} in Appendix~\ref{app:isgroup}.  From that
result, Theorem~\ref{thm:main} can be described as nearly-linear time on all
inputs, that is, it properly accepts or rejects all strings without assuming
external promises on these inputs (such as that the tables represent groups
or groups with some property). Theorem~\ref{thm:isgrp} also offers a hint that
our strategy partly entails working with  data structures for permutation
groups, instead of working with the multiplication tables directly. This is
responsible for much of the nearly-linear time complexity of the various group
theoretic routines upon which we build our algorithm for Theorem~\ref{thm:main}.

While we provide a self-contained proof, Theorem~\ref{thm:isgrp} is an example
of a general approach we are developing for shifting promise problems to
deterministic problems, see also Section \ref{sec:outlook}. Promise problems are
especially common whenever inputs are given by compact encodings such as
black-box inputs, see Goldreich \cite{promise}. In ongoing work \cite{glassbox},
we introduce a more general process for verifying promises by specifying inputs
not as strings for a Turing machine, but rather as types in a sufficiently
strong Type Theory. Theorem~\ref{thm:isgrp} can be interpreted as an example of
such an input where the rows of the multiplication table are themselves treated
as inhabitants of a permutation type. The algorithm then effectively type-checks
that these rows satisfy the required introduction rules for a permutation group
type.  Type-checking is not in general decidable so the effort is to confirm an
efficient complexity for specific settings.  As a by-product of such models, the
subsequent algorithms also profit from using these rich data types; for more
details on this topic we refer to~\cite{glassbox}.

Group isomorphism has been such a tenacious problem that it has benefited from
analysis in stronger computational models, such as a RAM model where the data
is pre-loaded into registers and operations are costed as $O(1)$-time; cf.\
Section \ref{secCM}.  It is also common to use the promise hierarchy where
the axioms of a group and any further restrictions on the input are presupposed
without being part of the timing.  Recasting our result in similar models yields
the following.

\begin{corollary}
  Working in the promise hierarchy and using a RAM model with $O(1)$-time table
  look-ups, there is a deterministic algorithm that decides in time
  $O(n^{1+o(1)})$ whether two multiplication tables on $[n]$, with $n\in\ourset$,
  describe isomorphic groups.
\end{corollary}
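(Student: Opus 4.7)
The plan is to lift the Turing-machine algorithm of Theorem~\ref{thm:main} to the RAM/promise setting by a direct re-costing rather than by designing a new algorithm. Two things change simultaneously. First, the promise eliminates the cost of deciding membership in the class of groups that would otherwise come from Theorem~\ref{thm:isgrp}: the algorithm may assume that $T$ and $T'$ are valid Cayley tables and never has to scan the whole input simply to verify closure, associativity, and the existence of inverses. Second, in the RAM model each entry $T[i,j]$ is available in $O(1)$ time and arithmetic on $O(\log n)$-bit words costs $O(1)$, so every step of the TM analysis that manipulates a single element of $[n]$ contracts from $\Theta(\log n)$ bit operations to $O(1)$ word operations, absorbing one factor of $\log n$ into the word cost.

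After this contraction the task reduces to bounding the number of table look-ups and word operations that the algorithm actually performs. I would walk through the ingredients used inside the proof of Theorem~\ref{thm:main}: extracting a short generating set, building a Schreier--Sims base and strong generating set on the Cayley permutation representation supplied by Theorem~\ref{thm:isgrp}, computing Sylow subgroups and characteristic filters, and dispatching the specialised isomorphism tests for each shape of order in $\ourset$. Because $n\in\ourset$ forces $\mu(n)$, and hence the generator count $d(n)\leq \mu(n)+1$, to grow like $(\log n)^{o(1)}$, each of these ingredients admits an implementation whose cost is $n\cdot(\log n)^{O(1)}$ in both table queries and word operations; summing and absorbing $(\log n)^{O(1)}$ into $n^{o(1)}$ yields the desired $O(n^{1+o(1)})$ bound.

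The main obstacle is the bookkeeping needed to ensure that no step of Theorem~\ref{thm:main} secretly touches all $n^2$ cells of the table. Wherever the Turing machine scanned a whole row or column as part of its standard I/O discipline, the RAM implementation must be reorganised so that it queries only those cells actually required to extend the current base or to resolve the next membership test. This is feasible precisely because the pipeline of Theorem~\ref{thm:main} is driven by a generating set of size $(\log n)^{o(1)}$ and a base of size $O(\log n)$, rather than by a global sweep of the multiplication table; once this adaptive access pattern is made explicit for each subroutine, the contributions add to $n\cdot(\log n)^{O(1)} = n^{1+o(1)}$ as claimed.
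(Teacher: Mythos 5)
Your re-costing strategy is the right overall idea: the corollary is indeed obtained by auditing the proof of Theorem~\ref{thm:main} in the RAM/promise setting rather than by a new algorithm, and the promise kills the $\tilde{O}(n^2)$ verification cost of Theorem~\ref{thm:isgrp} while RAM look-ups remove the tape-repositioning penalties. But two links in your chain are misidentified, and one of them, if taken at face value, does not produce the claimed bound.

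First, the claim that ``$n\in\ourset$ forces $\mu(n)$, and hence $d(n)\leq\mu(n)+1$, to grow like $(\log n)^{o(1)}$'' and that this is why ``each ingredient admits an implementation whose cost is $n\cdot(\log n)^{O(1)}$'' misdiagnoses the source of the speed-up. It is true that $\mu(n)\leq\log\log n$ for $n\in\ourset$, but controlling the generator count only yields the generic $n^{O(d(n))}\subseteq n^{O(\log\log n)}$ brute-force bound, which is not $n^{1+o(1)}$. What actually drives the proof of Theorem~\ref{thm:main} is Theorem~\ref{thm:split}: the group splits as $G=H\ltimes B$ with $B$ a \emph{cyclic} normal Hall $\bppi(n)$-subgroup and with the complement of \emph{order} $|H|\in(\log n)^{O((\log\log n)^c)}$. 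Brute-force isomorphism testing and enumeration of $\Aut(H)$ are run on $H$ alone, which has $n^{o(1)}$ elements, and the rest is handled via Lemma~\ref{lemcomp}, where $\Aut(B)$ is abelian because $B$ is cyclic. Your argument never invokes this decomposition and therefore does not account for why the non-linear parts of the computation stay at $n^{o(1)}$; describing the pipeline as ``Schreier--Sims BSGS, Sylow subgroups, characteristic filters'' is not what the paper does, and ``specialised isomorphism tests for each shape of order in $\ourset$'' does not describe any step of the algorithm.

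Second, the assertion that the RAM model ``absorb[s] one factor of $\log n$ into the word cost'' substantially undersells the savings that are actually being used and on its own cannot bridge $O(n^2(\log n)^c)$ to $O(n^{1+o(1)})$. In the Turing-machine analysis (Section~\ref{secCM}) a single product via the original Cayley table costs $O(n^2\log n)$ due to head repositioning; the paper mitigates this by constructing the permutation-group data structure of Theorem~\ref{thm:isgrp}, whose construction itself costs $\tilde{O}(n^2)$. In the RAM model a table look-up is $O(1)$, so the \emph{whole} $\tilde{O}(n^2)$ preprocessing shrinks to $\tilde{O}(n)$: orbit enumerations on $O(\log n)$ generators become $O(n(\log n)^{O(1)})$. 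That is a saving of order $n$, not of order $\log n$. Your third paragraph gestures in this direction with the adaptive-access remark, but the explicit quantitative claim in your second paragraph is the one carrying the bound, and it is not correct. For a clean argument you should (i) drop the $\mu(n)$ rationale and invoke Theorem~\ref{thm:split}/Theorem~\ref{thm:iso} for the $n^{o(1)}$-size data $(H,\theta)$, and (ii) replace the ``$\log n$ factor'' claim with the observation that each group operation and each table query costs $O(1)$ in the RAM model, so that the preprocessing is $\tilde{O}(n)$ and everything downstream is $n^{o(1)}$.
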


\subsection{Structure}
In Section \ref{sec:prel} we introduce relevant notation and state
Theorems~\ref{thm:dense}--\ref{thm:iso},  which are the main ingredients for our
proof of Theorem~\ref{thm:main}. Since the proofs of
Theorems~\ref{thm:dense}--\ref{thm:iso} are more involved and partly depend on
technical Group Theory results,  we delay them until Appendix~\ref{app:proofs}. In
Section~\ref{sec:found} we discuss some algorithmic results required for our
proof of  Theorem \ref{thm:iso}. Proofs for our  main results are provided in
Section \ref{sec:proofsmain}. A conclusion and outlook are given
Section~\ref{sec:outlook}.

\section{Notation and preliminary results}\label{sec:prel} 

\subsection{Computational model}\label{secCM}
Throughout $n$ is the order of the multiplication tables used as input to
programs, so input lengths are $\Theta(n^2\log n)$. We write $\tilde{O}(n^d)$
for $O(n^d(\log n)^c)$, where $c$ and $d$ are constants, and we note that
$O(n(\log n)^{O((\log \log n)^c)})\subset O(n^{1+o(1)})$. Computations
are carried out on a multi-tape Turing machine (TM) with separate tapes for each input
group, an output tape, and a pair of spare tapes of length $O(n\log n)$ to store
our associated permutation group representations developed in the course of
Theorem~\ref{thm:isgrp}. In this model, carrying out a group multiplication
requires one to reposition the tape head to the correct product, at the cost of
$O(n^2\log n)$. That will be prohibitive for our given timing,  so
our first order of business will be to replace the input with an efficient
$\tilde{O}(n)$-time multiplication, cf.\ Remark \ref{rem:fast-multi}.  For
comparison, work of Vikas \cite{vikas} and Kavitha \cite{kavitha}  provide isomorphism tests
for abelian groups using $O(n)$ group operations.  Such an algorithm can be
considered as linear time in a random access memory (RAM) model where group and
arithmetic operations are stated as a unit cost.  This is partly how it is
possible to produce a running time shorter than the input length.  In general,
an $f(n)$-time algorithm in a RAM model produces an $O(f(n)^3)$-time algorithm
on a TM, see \cite[Section~2]{Papa}, although a lower complexity reduction may
exist for specific programs.  In particular results in \cite{das} \cite{kavitha}
are no more than $\tilde{O}(n^3)$-time on a Turing Machine.

\subsection{Group theory preliminaries} 
 We follow most common conventions in group theory, e.g.\ as in
\cite{rob,Seress}.  Given a group $G$, a subgroup $H\leq G$ is a nonempty subset
which is a group with the inherited operations from $G$.   Homomorphisms between
groups $G$ and $K$ are functions $f\colon G\to K$ with $f(xy)=f(x)f(y)$ for all
$x,y\in G$. For $S\subseteq G$, let $\langle S\rangle$ be the intersection of
all subgroups containing $S$; it is the smallest subgroup of $G$ containing $S$,
also called the subgroup \emph{generated} by $S$.  The commutator of group
elements $x,y$ is $[x,y]=x^{-1} y^{-1} xy$, and conjugation is written as
$x^y=x[x,y]$. For $X,Y\subseteq G$, let $[X,Y]=\langle [x,y] :  x\in X,y\in Y
\rangle$. The number of elements in $G$, the \emph{order} of $G$,  is denoted
$|G|$; in this work, $G$ always is a finite group.

Given a set $\pi$ of primes, a subgroup $H\leq G$ is  a \emph{Hall
$\pi$-subgroup} if every prime divisor of $|H|$ lies in $\pi$, and every prime
divisor of the  \emph{index} $|G:H|=|G|/|H|$ does not lie in $\pi$.  If
$\pi=\{p\}$, then  $H$ is a \emph{Sylow $p$-subgroup}.  A further convention is
to let $\pi'$ denoted the complement of $\pi$ in the set of all primes and to
speak of Hall $\pi'$-subgroups. The \emph{$\pi$-factorization} of an integer
$n>1$ is $n=ab$, where every prime divisor of $b$ lies in $\pi$, and no prime
divisor of $a$ lies in $\pi$.

A subgroup $B$ is \emph{normal} in $G$, denoted $B\unlhd G$, if $[G,B]\leq B$.
The group $G$ is \emph{simple} if its only normal subgroups are $\{1\}$ and $G$.
A \emph{composition series} for $G$ is a  series
$G=G_1>\ldots>G_m=\{1\}$ of subgroups with each $G_{i+1}\unlhd G_i$ and each
\emph{composition factor} $G_i/G_{i+1}$ is simple. The group $G$ is
\emph{solvable} if every composition factor is abelian. 

A normal subgroup $B\unlhd G$  \emph{splits} in $G$, denoted  $G=H\ltimes B$,
if there is  $H\leq G$ with $H\cap B=\{1\}$ and $G=\langle H,B\rangle$. Let
$\mathrm{Aut}(B)$ be the set of invertible homomorphisms $B\to B$.  
If $G=H\ltimes B$ and $h\in H$, then conjugation $b\mapsto b^h$  defines
$\theta(h)\in\mathrm{Aut}(B)$, and $\theta\colon H\to \mathrm{Aut}(B)$,
$h\mapsto \theta(h)$, is a homomorphism. Conversely,  given  $(H,B,\theta)$ with
homomorphism $\theta\colon H\to\mathrm{Aut}(B)$, there is a group
$H\ltimes_{\theta} B$ on the set $\{(h,b): h\in H, b\in B\}$ with product
$(h_1,b_1)(h_2, b_2)  = (h_1 h_2, b_1^{h_2} b_2)$; here we abbreviate
$b_1^{h_2}=\theta(h_2)(b_1)$. If conjugation in $G=H\ltimes B$ induces
$\theta\colon H\to \mathrm{Aut}(B)$, then $G$ is isomorphic to
$H\ltimes_{\theta} B$. In fact, the following observation holds; we will use this
lemma only in the situation that $B$ and $\tilde B$ are cyclic groups; this case of
Lemma \ref{lemcomp} is proved in \cite[Lemma~2.8]{cgroups}.

\begin{lemma}\label{lemcomp}
    Let $G=H\ltimes_{\theta}B$ and $\tilde{G}=\tilde{H}\ltimes_{\tilde{\theta}}\tilde{B}$.
    If $\alpha\colon H\to \tilde{H}$ and $\beta\colon B\to \tilde{B}$ are isomorphisms such 
    that for all $h\in H$ 
    \begin{align}\label{def:compatible}
        \tilde{\theta}(\alpha(h)) & = \beta\circ \theta(h)\circ \beta^{-1},
    \end{align}
    then $(h,b)\mapsto (\alpha(h),\beta(b))$ is an isomorphism $G\to \tilde{G}$.
    Conversely, if $G$ and $\tilde{G}$ are isomorphic and $H$ and $B$ have coprime orders, 
     then there is an isomorphism $G\to \tilde{G}$ of this form.
\end{lemma}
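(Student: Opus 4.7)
The plan is to handle the two directions separately, with the first being a short direct verification and the second using the coprime order hypothesis together with the Schur--Zassenhaus theorem.

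For the forward direction, I would define $\varphi\colon G\to\tilde G$ by $\varphi(h,b)=(\alpha(h),\beta(b))$. Bijectivity is immediate from the bijectivity of $\alpha$ and $\beta$. To check $\varphi$ is a homomorphism, I expand both sides using the semidirect product rule: on one hand,
\[
\varphi\bigl((h_1,b_1)(h_2,b_2)\bigr)=\varphi\bigl(h_1h_2,\theta(h_2)(b_1)\,b_2\bigr)=\bigl(\alpha(h_1)\alpha(h_2),\,\beta(\theta(h_2)(b_1))\,\beta(b_2)\bigr),
\]
and on the other,
\[
\varphi(h_1,b_1)\varphi(h_2,b_2)=\bigl(\alpha(h_1)\alpha(h_2),\,\tilde\theta(\alpha(h_2))(\beta(b_1))\,\beta(b_2)\bigr).
\]
These agree precisely because the compatibility condition \eqref{def:compatible} says $\beta\circ\theta(h_2)=\tilde\theta(\alpha(h_2))\circ\beta$ on $B$. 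This step is pure bookkeeping.

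For the converse, suppose $\varphi\colon G\to\tilde G$ is an isomorphism and $\gcd(|H|,|B|)=1$. Since $B\trianglelefteq G$, $\varphi(B)$ is a normal subgroup of $\tilde G$ of order $|B|=|\tilde B|$. Because $|\tilde B|$ and $|\tilde G:\tilde B|=|\tilde H|$ are coprime, $\tilde B$ is the unique normal Hall subgroup of $\tilde G$ of that order, so $\varphi(B)=\tilde B$, giving an isomorphism $\beta:=\varphi|_B\colon B\to\tilde B$. Now $\varphi(H)$ is a complement of $\tilde B$ in $\tilde G$. By the Schur--Zassenhaus theorem (which applies because $\tilde B$ is a normal Hall subgroup), all complements are conjugate, so there exists $\tilde b\in\tilde B$ with $\tilde b^{-1}\varphi(H)\tilde b=\tilde H$. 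Replacing $\varphi$ by $\psi(x)=\tilde b^{-1}\varphi(x)\tilde b$ yields a new isomorphism $G\to\tilde G$ which still maps $B$ to $\tilde B$ (by normality of $\tilde B$) and now also maps $H$ to $\tilde H$. Setting $\alpha:=\psi|_H$ and $\beta:=\psi|_B$, I then verify $\psi(h,b)=\psi(h,1)\psi(1,b)=(\alpha(h),\beta(b))$, so $\psi$ has the prescribed form; the compatibility condition then follows automatically because $\psi$ is a homomorphism (or from the forward direction applied in reverse).

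The main obstacle is the converse: one must produce isomorphisms $\alpha,\beta$ whose pairing yields a global isomorphism. The forward direction shows that we need compatibility, but the converse must \emph{extract} compatible $\alpha,\beta$ from an abstract isomorphism $\varphi$ which need not respect the decomposition $G=H\ltimes B$ at all. The coprime order hypothesis is exactly what pins $\varphi(B)=\tilde B$ (uniqueness of the normal Hall subgroup) and adjusts $\varphi(H)$ to $\tilde H$ by a single inner automorphism (Schur--Zassenhaus). Without that hypothesis one could have many non-conjugate complements and many choices of normal subgroup of the correct order, and the lemma would fail. Since the paper cites the result only for cyclic $B$ via \cite[Lemma~2.8]{cgroups}, I would note that the proof above works more generally whenever $|H|$ and $|B|$ are coprime.
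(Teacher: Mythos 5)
The paper does not actually supply a proof of Lemma~\ref{lemcomp}; it remarks that the result is used only for cyclic $B,\tilde B$ and refers that case to \cite[Lemma~2.8]{cgroups}. So there is no in-paper argument to compare against, and your proof is a welcome self-contained treatment of the general coprime case. Both directions of your argument are correct: the forward direction is the routine unwinding of the semidirect-product multiplication $(h_1,b_1)(h_2,b_2)=(h_1h_2,\theta(h_2)(b_1)b_2)$, and the converse correctly combines the characterisation of $B$ as the unique normal Hall subgroup with Schur--Zassenhaus conjugacy of complements to adjust $\varphi$ by an inner automorphism until it respects both factors, after which $\psi(h,b)=\psi(h,1)\psi(1,b)$ forces the product form and the compatibility condition drops out.

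One small point worth stating explicitly: the converse as written tacitly uses that $|B|=|\tilde B|$ and that $\gcd(|\tilde H|,|\tilde B|)=1$. Neither follows from ``$G\cong\tilde G$ and $\gcd(|H|,|B|)=1$'' alone --- for instance $G=C_2\ltimes C_3$ and $\tilde G=C_3\ltimes C_2$ are isomorphic with $\gcd(|H|,|B|)=1$ on the left, yet no map $B\to\tilde B$ can be an isomorphism. Since the conclusion of the lemma posits an isomorphism $\beta\colon B\to\tilde B$, these equalities are part of what ``of this form'' presupposes, and in the paper's application (both $B$ and $\tilde B$ are Hall $\bppi(n)$-subgroups of groups of the same order $n$) they hold automatically. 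It would be cleaner to flag this hypothesis at the start of the converse rather than let it enter silently in the sentence ``$\varphi(B)$ is a normal subgroup of $\tilde G$ of order $|B|=|\tilde B|$.''

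Two further remarks on precision. First, when you invoke ``all complements are conjugate,'' you are using the full strength of Schur--Zassenhaus, which requires the solvability of $\tilde B$ or of $\tilde G/\tilde B$; this is guaranteed by coprimality via Feit--Thompson, but is worth a word since the paper is otherwise scrupulous about what it uses. Second, the identification $\varphi(B)=\tilde B$ is most transparently justified by observing that a normal Hall $\pi$-subgroup equals $O_\pi(\tilde G)$ and is therefore unique; ``unique normal Hall subgroup of that order'' is correct but slightly compressed. With these clarifications, your argument is complete and, as you note, works in the full coprime generality rather than only the cyclic case the paper cites.
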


A variation of this observation also occurs in  \cite[Section 4.2]{BQ} and in
\cite[Section A.4.1]{GQ}. Note that in the last part of the lemma (when $H$ and
$B$ have coprime orders), the group $B$ is generated by all elements that have
order coprime to $|H|$; this makes $B$ a characteristic Hall subgroup.



\subsection{Number theory preliminaries}\label{secnt}
Our algorithm for Theorem \ref{thm:main} depends on crucial number theoretic
observations. For integers $n$, we characterize a family of prime divisors we
call \emph{strongly isolated}, and we show that any group $G$ of order $n$ in
our dense set $\ourset$ decomposes as $G=H\ltimes B$ such that the prime
factors of $|B|$ are exactly the strongly isolated prime divisors of $n$ that
are larger than $\log\log n$; we also prove that $B$ is cyclic. This
reduces our isomorphism test to considering the data $(H,B,\theta)$ and Lemma
\ref{lemcomp}. Not just isomorphism testing, but many natural questions of
finite groups reduce to properties of $(H,B,\theta)$, and so this decomposition
has interesting implications for computing with groups generally. Note that for
a group $G$ with order $n\in\ourset$, the decomposition described above defines
integers $a=|H|$ and $b=|B|$ that depend only on $n$. Our definition of
$\ourset$ will also imply that $b$ is square-free, and if a prime power $p^e$ with
$e>1$ divides $n$, then $p^e\leq \log n$ and $p^e\mid a$.  With $B$ being
cyclic, the group theory of $B$ is elementary, and while the group theory of $H$
can be quite complex, we will see in Theorem~\ref{thm:split}  that $H$ has
relatively small size, meaning that  brute-force becomes an efficient solution.
The following definition is central for our work.

\begin{definition}\label{def:isolated} 
  Let $n\in\mathbb{N}$. Write $2^{\nu_2(n)}$ for the largest $2$-power dividing
  $n$.  A prime $p\mid n$ is \emph{isolated} if $k=0$ for every prime power
  $q^k$ with  $q^k\mid n$ and $p\mid (q^k-1)$. If, in addition, $p\nmid |T|$
  for every non-abelian simple group $T$ of order dividing $n$, then $p$ is
  \emph{strongly isolated}. We write $\ppi(n)$ for the set of strongly isolated
  prime divisors of $n$; the subset of `large'  prime divisors is \[\bppi(n)=\{p\in\ppi(n) : p>\log\log n\}.\]
\end{definition}

For example, $31$ is isolated in $2^4 \cdot 5^2\cdot 31$ but not in $2^5\cdot
5^2\cdot 31$ or in $2^4\cdot 5^3\cdot 31$. As indicated above, the properties of
our dense set $\Upsilon$ are a critical ingredient in our algorithm  for Theorem
\ref{thm:main}; we are now in the situation to give the formal definition, cf.\
\cite{ErdosPalfy}.

\begin{definition}\label{def:ups} 
  Let $\ourset\subseteq\mathbb{N}$ be the set
  of all integers $n$ that factor as $n=ab$ such that:
    \begin{itemize} 
      \item[a)] if $p\mid a$ is a prime divisor, then $p\leq \log \log n$ and, if $p^e\mid
      a$, then  $p^e\leq \log n$;
      \item[b)] if $p\mid b$ is a prime divisor, then $p>\log\log n$ and $p\mid n$ is
       isolated;
      \item[c)] the factor $b$ is square-free, that is, if $p^e\mid b$ is a prime
        power divisor, then $e\in\{0,1\}$;
      \item[d)] the factor $b$ has at most $2\log \log n$ prime divisors.  
    \end{itemize} 
\end{definition}

\begin{theorem}\label{thm:dense}
  The set $\ourset$ is a dense subset of $\mathbb{N}$.
\end{theorem}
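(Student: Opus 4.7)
The plan is to establish density of $\Upsilon$ by showing that, for each of the four defining conditions (a)--(d) in Definition~\ref{def:ups}, the set of integers $n$ violating that condition has natural density zero; density of $\Upsilon$ then follows by a union bound. For a given $n$, I would work with the canonical candidate factorization $n = a(n) b(n)$, where $a(n) = \prod_{p \mid n,\, p \leq \log\log n} p^{\nu_p(n)}$ and $b(n) = n/a(n)$. By construction, $n \in \Upsilon$ is equivalent to this factorization satisfying (a)--(d), so non-membership is witnessed by at least one failure.

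I would first dispatch the three easier conditions using standard analytic number theory. Condition (d), that $b(n)$ has at most $2\log\log n$ prime divisors, follows from the Hardy--Ramanujan theorem, since the normal order of $\omega(n)$ is $\log\log n$ and $\omega(n) < 2 \log\log n$ holds on a density-$1$ set. Condition (c), square-freeness of $b(n)$, holds outside the set of $n$ divisible by $p^2$ for some prime $p > \log\log n$, whose density is bounded by $\sum_{p > \log\log n} p^{-2} = o(1)$. Condition (a), that each prime power $p^e \mid a(n)$ satisfies $p^e \leq \log n$, fails only when $p^e \mid n$ for some prime $p \leq \log\log n$ with $p^e > \log n$; the density of multiples of such a fixed $p^e$ is $1/p^e < 1/\log n$, and summing over the $O(\log\log n \cdot \log n)$ relevant pairs $(p,e)$ (with geometric decay in $e$) still gives $o(1)$.

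The hard part is condition (b): for almost all $n$, every prime divisor $p > \log\log n$ of $n$ is strongly isolated. The non-isolation event requires a triple $(p,q,k)$ with $p \mid n$, $p > \log\log n$, $q^k \mid n$, and $p \mid q^k-1$; strong isolation additionally excludes $p$ dividing the order of some nonabelian simple section of $G$, but the set of $n$ with a nonabelian-simple-order divisor above the threshold is itself of density zero because such orders are sparse. For the main isolation bound I would invoke the estimates of Erd\H{o}s--P\'alfy~\cite{ErdosPalfy}, whose work on the density of integers $n$ for which every group of order $n$ is cyclic provides exactly the quantitative tail estimates needed. The density of $n$ with a prescribed offending triple $(p,q,k)$ is at most $1/(p\, q^k)$, and a careful summation over admissible $(p,q,k)$ with $p > \log\log n$ and $p \mid q^k-1$ --- exploiting that $p \leq q^k$ forces either $q \geq p$ or $k$ large, so the inner sum $\sum_{q^k \equiv 1 \pmod p} (q^k)^{-1}$ is $O((\log p)/p)$ --- produces a tail bounded by $\sum_{p > \log\log n} (\log p)/p^2 = o(1)$.

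Combining the four density-zero exceptional sets gives $\delta(\mathbb{N} \setminus \Upsilon) = 0$, hence $\delta(\Upsilon) = 1$. The delicate step is the quantitative estimate for (b); this is where the Erd\H{o}s--P\'alfy framework does the real work, and I would expect the write-up to spend most of its effort rephrasing their tail bound into the form required by Definition~\ref{def:isolated}, with the strong-isolation refinement handled as an inexpensive separate density-zero contribution.
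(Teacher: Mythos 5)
Your proposal follows the same approach as the paper: decompose the complement of $\ourset$ into exceptional sets for each of conditions (a)--(d), show each has density zero (Hardy--Ramanujan for (d), a prime-power tail bound for (a), Erd\H{o}s--P\'alfy-type estimates for the squarefree and isolation conditions in (b),(c)), and conclude via a union bound; the paper phrases this as an intersection of four dense sets and imports the (b),(c) estimates wholesale from Erd\H{o}s--P\'alfy [Lemmas 3.5 and 3.6], working out the $N(x)$ computation explicitly only for condition (a).

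One misreading to correct: Definition~\ref{def:ups}(b) asks only that the prime divisors of $b$ be \emph{isolated}, not \emph{strongly} isolated. Strong isolation and the set $\bppi(n)$ enter only in Theorem~\ref{thm:split}, which is a separate group-theoretic statement; the density theorem itself involves no simple groups. Your supplementary argument that ``the set of $n$ with a nonabelian-simple-order divisor above the threshold has density zero'' is therefore unnecessary, and as stated would require its own justification---it is not what the cited Erd\H{o}s--P\'alfy lemmas supply. The triple-$(p,q,k)$ summation you sketch for genuine isolation is the relevant content and is exactly what [Lemma 3.6] delivers; like your estimate for (a), a fully rigorous version must account for the $n$-dependence of the threshold $\log\log n$, e.g.\ by restricting attention to $n\in[\sqrt{x},x]$ as the paper does in its $N(x)$ computation.
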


A proof of Theorem \ref{thm:dense} and more details on  $\ourset$ are
given in Appendix \ref{app:dense} and Section~\ref{sec:outlook}.

\subsection{Splitting results}
As mentioned in Section \ref{secnt}, the properties  of $n\in \ourset$ impose
limits on the structure of groups of order $n$; we prove the next theorem  in
Appendix~\ref{appthm:split}:

\begin{theorem}\label{thm:split}  
 Every group $G$ of order $n\in\ourset$ has  a unique Hall $\bppi(n)$-subgroup
 $B$, which is cyclic, and $G=H\ltimes B$ for some subgroup $H\leq G$ of small
 order $|H|\in (\log n)^{O((\log\log n)^c)}$ for some $c$.
\end{theorem}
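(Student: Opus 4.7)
The plan is to identify the Hall $\pi$-subgroup of $G$ (where $\pi=\bppi(n)$) with that of the solvable radical $R=R(G)$, prove by induction on $|R|$ that it is cyclic and normal there, and then split $G$ over it via Schur-Zassenhaus. Since every $p\in\pi$ satisfies $p>\log\log n$, it must divide the part $b$ of Definition~\ref{def:ups}, and square-freeness of $b$ forces the Sylow $p$-subgroup of $G$ to be cyclic of order $p$. Strong isolation further ensures that $p$ divides no nonabelian composition factor of $G$, so every composition factor whose order is divisible by some prime of $\pi$ is $C_p$; the entire $\pi$-part of $|G|$ therefore lies in $R$, and any Hall $\pi$-subgroup of $R$ is simultaneously a Hall $\pi$-subgroup of $G$.

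The key technical consequence of isolation is: if $g\in G$ is a $p$-element for some $p\in\pi$ and $Q\leq G$ is a $q$-subgroup normalized by $g$, then $g$ centralizes $Q$. Indeed $g$ acts on $Q/\Phi(Q)$, of rank $d\leq v_q(n)$, through $\mathrm{GL}_d(\mathbb{F}_q)$; a nontrivial action would force $p\mid q^i-1$ for some $i\leq v_q(n)$, which violates isolation because $q^i\mid n$, and a standard coprime-action argument lifts trivial action on the Frattini quotient to trivial action on $Q$. As a special case, for distinct $p,q\in\pi$ isolation applied to $q^1\mid n$ gives $p\nmid q-1$, so applying Burnside's transfer theorem off the largest prime of $\pi$ and inducting shows that a Hall $\pi$-subgroup $B$ is a direct product of its cyclic Sylow subgroups, hence cyclic of square-free order $\prod_{p\in\pi}p$.

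Normality in $R$ is established by induction on $|R|$, taking a minimal normal subgroup $N\trianglelefteq R$, which is elementary abelian for some prime $q$. If $q\in\pi$ then $|N|\leq|P_q|=q$ forces $N=P_q$, a normal Sylow subgroup, and induction on $R/N$ produces a cyclic normal Hall $(\pi\setminus\{q\})$-subgroup whose preimage is the desired $B$; the complement's coprime action on $N$ is trivial by the same isolation argument, preserving cyclicity. If $q\notin\pi$ the key lemma places the $\pi$-part of $R$ inside $C_R(N)$, so induction inside $C_R(N)$ (or inside $R/N$ when $N$ is central) produces a cyclic normal Hall $\pi$-subgroup which, being characteristic in a proper normal subgroup, lifts to a normal Hall $\pi$-subgroup of $R$. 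Then $B$ is unique and hence characteristic in $R$, and $R\trianglelefteq G$ yields $B\trianglelefteq G$; since $\gcd(|B|,|G/B|)=1$ and $B$ is cyclic, Schur-Zassenhaus splits the sequence $1\to B\to G\to G/B\to 1$, producing $G=H\ltimes B$ with $|H|=n/|B|$.

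For the bound on $|H|$, the prime divisors of $|H|$ are exactly the primes of $a$ together with any primes of $b$ that are isolated but not strongly isolated. The former contribute at most $(\log n)^{\log\log n}$ by Definition~\ref{def:ups}(a), since there are at most $\log\log n$ such primes, each with prime-power part bounded by $\log n$. A prime $p$ of the latter type divides $|T|$ for some nonabelian simple $T$ with $|T|\mid n$, and the classification of finite simple groups implies that $|T|$ then has many distinct prime divisors, so the cap of $2\log\log n$ prime divisors of $b$ in Definition~\ref{def:ups}(d) forces any such $p$ to be polylogarithmic in $\log\log n$; the product of at most $2\log\log n$ such primes is absorbed into $(\log n)^{O(\log\log n)}$, yielding the bound with $c=1$. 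The main obstacle foreseen is this last CFSG-based step: extracting, uniformly across the alternating, sporadic, and Lie-type families, a clean quantitative statement that a prime $p$ dividing the order of a nonabelian simple group forces a number of distinct prime divisors of that order growing with $p$, so that Definition~\ref{def:ups}(d) translates into a usable upper bound on such $p$.
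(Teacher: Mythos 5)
Your overall plan runs parallel to the paper's in several places, and the inductive argument you give for showing that a solvable group has a normal cyclic Hall $\pi$-subgroup (minimal normal subgroup, coprime‐action lemma from isolation, Schur--Zassenhaus to preserve cyclicity) is a valid alternative to the paper's more direct route via Hall $\{p,q\}$-subgroups, where the Sylow count $h_p$ is forced to $1$ by isolation and a Sylow basis then gives normality of each $P_p$ in $G$ outright. However, there is a genuine gap before you ever reach the solvable radical.

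You assert that since every composition factor of order divisible by a prime $p\in\pi$ is $C_p$, ``the entire $\pi$-part of $|G|$ therefore lies in $R$.'' That ``therefore'' does not follow: a composition factor $C_p$ can perfectly well sit above the solvable radical (e.g.\ $\mathrm{Sym}_5$ has trivial solvable radical yet a $C_2$ composition factor). What you actually need is $p\nmid|G:R(G)|$, and this requires an argument about $G/R$. The paper supplies it via the Babai--Beals filtration $G\geq\mathrm{PKer}(G)\geq\mathrm{Soc}^\ast(G)\geq O_\infty(G)$: strong isolation gives $p\nmid|T_i|$ for each simple factor $T_i$ of the socle, the CFSG fact that a prime dividing $|\Aut(T_i)|$ must divide $|T_i|$ then gives $p\nmid|\mathrm{PKer}(G)/O_\infty(G)|$, and for the top layer $G/\mathrm{PKer}(G)\hookrightarrow\mathrm{Sym}_\ell$ one shows $p>\ell$ by observing that each $|T_i|$ is even (Odd Order Theorem), so $2^\ell\mid n$, hence $\ell\leq\nu_2(n)\leq\log\log n<p$ since $2^{\nu_2(n)}\leq\log n$ by Definition~\ref{def:ups}. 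Both the automorphism-order step and the $\ell\leq\nu_2(n)<p$ step are essential; neither appears in your argument, and without them the reduction to the solvable case is unproved.

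For the bound on $|H|$ you correctly identify where the difficulty lies but leave it open: your claim that CFSG forces a prime $p$ dividing the order of a non-abelian simple group $T$ with $|T|\mid n$ to be polylogarithmic in $\log\log n$ is not established, and you flag it yourself. The paper's argument is different and more modest: every non-abelian simple $T$ has a prime power divisor $r^m$ with $m>1$ and $|T|\leq(r^m)^{O(m)}$ (checked across the Lie-type, sporadic, and alternating families, the last via Legendre's formula for $\nu_2(k!)$); the constraints $r\leq\log\log n$, $r^m\leq\log n$ from Definition~\ref{def:ups}(a) then give $|T|\leq(\log n)^{O(\log\log n)}$, hence $p\leq(\log n)^{O(\log\log n)}$, and with at most $2\log\log n$ such primes one obtains $b/z\leq(\log n)^{O((\log\log n)^2)}$, i.e.\ $c=2$, not $c=1$. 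Also, a minor slip: for your Burnside-transfer step the needed condition is $q\nmid p-1$ for the other primes $q\in\pi$ (i.e.\ $q$-isolation applied to $p^1\mid n$), not $p\nmid q-1$; in any case, once each Sylow $p$-subgroup ($p\in\pi$) is shown normal of prime order, $B$ is automatically their direct product and hence cyclic, so the transfer argument is redundant.
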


\begin{remark}\label{remGen}{\rm In fact, our proof of Theorem \ref{thm:split}
also shows the following result for any group $G$ of any order $n\in
\mathbb{N}$: If $G$ is solvable  and $p\mid n$ is an isolated prime, or if $G$
is non-solvable, $p\mid n$ is a strongly isolated prime, and $p>\nu_2(n)$, then
$G$ has a normal Sylow $p$-subgroup $S\unlhd G$; in both cases,  $G=H\ltimes S$
for some $H\leq G$ by the Schur-Zassenhaus Theorem \cite[(9.1.2)]{rob}. Property
d) of Definition \ref{def:ups} is solely required to bound the size of
$H$.}
\end{remark} 

\begin{remark}
Theorem \ref{thm:split} shows that every group of order $n\in\ourset$ has a
`large' normal cyclic Hall subgroup that admits a `small' complement
subgroup.  This situation is covered by the main results of \cite{BQ}
(stated for groups with abelian Sylow towers), which deal with the case that a
normal Hall subgroup is abelian and the complement subgroups admit
polynomial-time isomorphism tests, see the discussion around \cite[Theorem
1.2]{BQ}. In conclusion, \cite{BQ} together with our Theorem \ref{thm:split}
already prove a polynomial-time isomorphism test for group of order
$n\in\ourset$. A significant part of this paper is devoted to  proving Theorem
\ref{thm:split}, and a careful analysis provides the nearly-linear time
complexity stated in Theorem \ref{thm:main}.
\end{remark}

The next theorem shows that we can construct generators for the decomposition in
Theorem~\ref{thm:split}; we discuss the proof of Theorem~\ref{thm:iso} in
Appendix~\ref{appthm:split}.

\begin{theorem}\label{thm:iso} 
  There is an $\tilde{O}(n^{1+o(1)})$-time algorithm that, given a group $G$ of order
  $n\in\ourset$, returns generators for $H,B\leq G$ such that $G=H\ltimes B$ and
  $B$ is a Hall $\bppi(n)$-subgroup. 
\end{theorem}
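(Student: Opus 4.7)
The plan is to make the decomposition from Theorem \ref{thm:split} constructive on top of the permutation-group representation supplied by Theorem \ref{thm:isgrp}. Throughout I assume that $G$ is given as a regular permutation action on $[n]$ in which multiplication costs $\tilde{O}(n)$ and orders of individual elements cost $O(n\log n)$ by cycle decomposition. The $\tilde{O}(n^{1+o(1)})$ budget then accommodates roughly $n^{o(1)}$ group operations on top of $\tilde{O}(n)$-sized linear passes over $G$.

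First I would recover $\bppi(n)$, $|B|=\prod_{p\in\bppi(n)}p$, and $|H|=n/|B|$ from the arithmetic structure of $\ourset$: polylog-many trial divisions strip off the prime factors $\leq\log\log n$, leaving a square-free part whose $O(\log\log n)$ large prime divisors are exposed with the help of orders of a small generating set $S$ of $G$ constructed next. Isolation is then checked directly against the factorisation of $n$, and strong isolation against the finite list of possible nonabelian simple orders dividing $n$. Theorem \ref{thm:split} guarantees $|H|\in(\log n)^{O((\log\log n)^c)}=n^{o(1)}$, keeping $|H|$-sized constructions well within budget.

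Next I would compute a generating set $S$ with $|S|\leq\mu(n)+1=O(\log n)$ by a Schreier--Sims-style greedy sweep that maintains a base and strong generating set for the permutation action, admitting each scanned element into $S$ only when it enlarges the current subgroup; the underlying membership tests remain nearly-linear in $n$. Computing $\mathrm{ord}(s)$ for every $s\in S$ then lets me, for each $p\in\bppi(n)$, select some $s\in S$ with $p\mid \mathrm{ord}(s)$; such $s$ must exist because by Theorem \ref{thm:split} (and Remark \ref{remGen}) the subgroup $G_{p'}=H\cdot B_{p'}$ is proper of index $p$ in $G$ and cannot swallow all of $S$. Setting $b_p=s^{\mathrm{ord}(s)/p}$ produces an element of order $p$ generating the unique normal Sylow $p$-subgroup inside $B$; since $B$ is cyclic the elements $b_p$ pairwise commute, and their product $b=\prod_{p\in\bppi(n)}b_p$ is a single generator of $B$.

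To build $H$, I would enumerate the $|H|=n^{o(1)}$ cosets of $B$ in $G$ by a breadth-first search through $G/B$, starting from $t_1=1$ and applying the generators $S$, with a previously-seen coset recognised by the test $(t_j^{-1}g)^{|B|}=1$ (valid because $B=\{x\in G:x^{|B|}=1\}$ by coprimality). The products $t_it_j$ then define a 2-cocycle $\beta(i,j)=t_{k(i,j)}^{-1}t_it_j\in B$, and since $\gcd(|H|,|B|)=1$ the standard averaging construction of Schur--Zassenhaus yields a 1-cochain $\gamma\colon H\to B$ with $\delta\gamma=\beta$; the set $\{t_i\gamma(t_i)^{-1}\}$ is then a subgroup serving as $H$, and its elements are returned as generators. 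The main obstacle I anticipate is the complexity bookkeeping of the second and third steps: producing the $O(\log n)$-size generating set and running the coset BFS must both be routed through the permutation-group data structure, because the budget $\tilde{O}(n^{1+o(1)})$ does not allow even one $\tilde{O}(n)$-time group multiplication per element of $G$.
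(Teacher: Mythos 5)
Your construction of generators for $B$ (the step that extracts, for each $p\in\bppi(n)$, an element $b_p$ of order $p$ as a power of some $s\in S$) rests on a false implication. You argue: the Hall $p'$-subgroup $G_{p'}=HB_{p'}$ is proper of index $p$, so cannot contain all of $S$, hence some $s\in S$ has $p\mid\mathrm{ord}(s)$. But an element lying outside a \emph{fixed} Hall $p'$-subgroup need not have order divisible by $p$ --- it may simply lie in a different, conjugate Hall $p'$-subgroup. Concretely, take $n=22\in\ourset$ with $\bppi(22)=\{11\}$, $G=D_{11}$ (dihedral of order $22$), and $S$ a pair of reflections, which the greedy Schreier--Sims sweep can perfectly well return: both elements of $S$ have order $2$, yet $G_{11'}$ is a single $\{1,s_0\}$ containing only one of them. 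Step~4 then has no element to work with, and the claimed generator $b$ of $B$ is never produced.

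The paper avoids this by not looking at $S$ directly: Corollary~\ref{corB} observes that $B=\{g\in G:g^{|B|}=1\}$ by coprimality, giving an $\tilde O(n)$-time membership test, and then \textsc{SchreierGraph} on $(S,B)$ returns $n^{o(1)}$ Schreier generators for $B$ within budget because $|G:B|=n^{o(1)}$; among any generating set of the cyclic group $B$ there is, for each $p\mid|B|$, some element with $p$ dividing its order, so your power-and-multiply assembly of a single cyclic generator is salvageable once routed through the Schreier machinery. Your treatment of the complement, on the other hand, is genuinely different from the paper's and looks sound: you enumerate the $n^{o(1)}$ cosets, read off a $2$-cocycle with values in $B$, and apply the explicit Schur--Zassenhaus averaging (using the inverse of $|H|$ modulo $|B|$), at cost $|H|^2\cdot\tilde O(n)=\tilde O(n^{1+o(1)})$. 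The paper instead builds a short presentation of $G/B$ (Proposition~\ref{prop:pres}) and solves for the coset corrections $m_i$ in $B$ by Hermite normal form (Proposition~\ref{prop:bigsplit}, following the Kantor--Luks--Mark \textsc{Complement} routine). Both meet the bound; yours is more self-contained, the paper's leans on established machinery and avoids the explicit $|H|$-th-root computation.
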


\section{Algorithmic preliminaries: presentations and complements}\label{sec:found}

\noindent We assume now that our input has been pre-processed and
confirmed to be a group by our algorithm for Theorem~\ref{thm:isgrp}. In so
doing, we also produce a series of important data types and accompanying
routines, including the following for each input group of order $n$:
\begin{itemize}
  \item a permutation group representation, 
  \item a generating set of size $O(\log n)$, 
  \item an algorithm to write group elements as a 
  product of the generators in time $\tilde{O}(n)$,
  \item an algorithm to multiply two group elements in time in $(\log n)^{O(1)}$, and
  \item an algorithm to test equality of elements in the group in time $\tilde{O}(n)$.
\end{itemize}
Remark~\ref{rem:fast-multi} explains some of these routines in more detail. The
advantage is that we can now multiply group elements without moving the Turing
machine head over the original Cayley tables which would cost us $\tilde O(n^2)$ steps
for each group operation.

Many of the following observations are variations on classical techniques
designed originally for permutation groups, e.g.\ as in
\cite{Luks:northeastern,Seress,simsbook}. We include proofs here to
demonstrate nearly-linear time when applied to the Cayley table model. For
simplicity we assume that all generating sets contain $1$. All our
lists have size $O(n)$, so all searches can be done
in nearly-linear time.

A \emph{membership test} for a subgroup $B\leq G$ is a function that, given
$g\in G$, decides whether $g\in B$. For example, a membership test for the
center $Z(G)\leq G$ is to report the outcome of the test whether $g\in G$
satisfies $g*s=s*g$ for all  $s\in S$. This test defines  $Z(G)$ without having
specified a generating set for it.

The next result discusses the construction of the Schreier coset graph
$\Cay(S,B)$ describing the action of $G=\langle S\rangle$ on cosets of $B\leq
G$: this is a labeled graph with vertices $\{xB : x\in G\}$, and an edge
$(uB,vB)$ exists if and only if $vB=suB$ for some $s\in S$. One such $s$ is
chosen as the label of $(uB,vB)$, and we denote such a labeled edge by
$(uB,svB;s)$. We note that a coset $uB$ is stored by a chosen representative
$u$. Note that if $B=1$, then this graph is the usual Cayley graph with respect
to the given generating set.

\enlargethispage{3ex}

\medskip

\mydes{SchreierGraph}
    {Given}{
        a group $G$ generated by $S\subseteq G$ and a membership test for a subgroup $B$,
    }
    {Return}{
        the Schreier coset graph $\Cay(S,B)=\{(xB,sxB;s)\mid s\in S, x\in G\}$ and together with a spanning 
        tree (a so-called \emph{Schreier tree}).
    }
  
\medskip

The next proposition uses well-known ideas that can already be found in
\cite{Luks:northeastern,Seress}; we provide a proof to justify our complexity
statements.

\begin{proposition}\label{prop:CayleyGraph}
  Let $G=\langle S\rangle$ with $|S|\in O(\log n)$ be a group of order $n$ and
  let $B\leq G$ with $[G:B]\in (\log n)^{O((\log \log n)^c)}$ be given by an
  $\tilde O(n^{1+o(1)})$-time membership test. {\rm\textsc{SchreierGraph}} can
  be solved in time $\tilde O(n^{1+o(1)})$. Once solved, the following hold:
  There is an $\tilde O(n^{1+o(1)})$-time algorithm that, given $g\in G$, finds
  a word $\bar{g}$ in  $S$ with $\bar{g} B=gB$. There is also an  an $\tilde
  O(n^{1+o(1)})$-time algorithm to compute a generating set for $B$ of size
  $(\log n)^{O((\log \log n)^c)}$.
\end{proposition}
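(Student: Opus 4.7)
The plan is to run breadth-first search on the coset graph, starting from the coset $B$, using the given membership test for $B$ to decide when a freshly computed element lies in an already-discovered coset. The key quantitative fact I will exploit throughout is that $m := [G:B] \in (\log n)^{O((\log\log n)^c)}$ is $n^{o(1)}$; so I can freely afford $O(m^2)$ membership tests and $O(m \cdot |S|)$ fast multiplications without breaking the $\tilde{O}(n^{1+o(1)})$ bound. These are classical Schreier-coset techniques (as in \cite{Luks:northeastern,Seress,simsbook}); the work is to verify that they fit the Turing-machine cost model assumed here.

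For \textsc{SchreierGraph} I would initialise a list of coset representatives $U := \{1\}$ and a queue containing $1$. Each time a representative $u$ is dequeued, I compute $v := su$ for every $s \in S$ using a multiplication in $(\log n)^{O(1)}$ time, and then scan $U$ for the unique $w$ with $w^{-1}v \in B$ via the given membership test. If a match is found, I record a non-tree edge $(uB, wB; s)$ together with the target representative $w$; otherwise I append $v$ to $U$, enqueue it, and record the spanning-tree edge $(uB, vB; s)$. Because $|U|$ never exceeds $m$ and there are at most $m \cdot |S|$ edges, the total number of membership tests is $O(m^2|S|) = n^{o(1)}$, and the full Schreier graph $\Cay(S,B)$ and spanning tree are produced in $\tilde{O}(n^{1+o(1)})$ time.

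For the coset-word routine on input $g \in G$, I locate $gB$ by testing whether $w^{-1}g \in B$ for each $w \in U$ (at most $m$ membership tests), then trace the matching $w$ back to $1$ through the spanning tree, concatenating edge labels in order to produce a word $\bar{g}$ in $S$ of length at most $m - 1$ with $\bar{g}B = gB$. For a generating set of $B$, I apply the classical Schreier lemma: for every $(u,s) \in U \times S$ let $w(u,s) \in U$ be the recorded representative of $suB$, and set $\sigma(u,s) := w(u,s)^{-1}(su)$. Each $\sigma(u,s)$ lies in $B$ (by the defining property of $w(u,s)$), the collection generates $B$ (Schreier), and there are $m \cdot |S| = (\log n)^{O((\log\log n)^c)}$ of them, each computed from $O(1)$ permutation-group operations in polylogarithmic time.

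The main obstacle is not algorithmic but quantitative bookkeeping: every membership test costs $\tilde{O}(n^{1+o(1)})$, so I must confirm that only $O(m^2 |S|)$ tests are ever invoked and that $m^2|S|$ and $m \cdot |S|$ are both still $n^{o(1)}$, which amounts to checking $(\log\log n)^{c+1} = o(\log n)$. A secondary care point is to fix conventions for left vs.\ right cosets so that the action $s \cdot (uB) = (su)B$ used in $\Cay(S,B)$ matches the form $w^{-1}(su)$ of the Schreier generators, and so that words produced by the tree trace compose in the correct order to return $\bar{g}B = gB$.
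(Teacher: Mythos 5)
Your proposal is correct and follows essentially the same route as the paper's proof: orbit enumeration over cosets using the membership test to detect collisions (you phrase it as BFS, the paper as a while-loop, but the bookkeeping and the $O([G:B]^2|S|)$ bound on membership tests are the same), locating a coset by scanning the transversal, tracing the spanning tree to produce the word $\bar g$, and Schreier's lemma to generate $B$. The one detail the paper makes explicit that you leave implicit is pre-computing inverses of the generators (via $s^{n-1}$) so that the $w^{-1}$ you need in the membership queries and in the Schreier generators $w(u,s)^{-1}(su)$ is available without further cost, but this is a routine implementation point that does not affect correctness or the complexity bound.
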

\begin{proof}
As a pre-processing step, we can assume that for each of the $O(\log n)$
  generators in $S$ we have also stored its inverse; the inverse of $s\in S$ can
  be computed as $s^{n-1}$ using fast exponentiation. We initialize  graphs $C$
  and $T$, both with vertex set $V=\{B\}$ and empty edge sets, and proceed as
  follows:   While there is $s\in S$ and a vertex $xB\in V$ such that $sxB\notin
  V$, add   $sxB$ to $V$ and add $(xB,sxB;s)$ to the edge sets of $C$ and $T$.
  Otherwise, $sxB\in V$ and  we add $(xB,sxB;s)$ only to the edge set of $C$.
  This algorithm terminates after $O(|S||G:B|)$ steps, and then $|V|=|G:B|$ and
  $V=\Cay(S,B)$ with Schreier tree  $T$; this follows from a discussion of the
  orbit-stabiliser algorithm, for example in  \cite[Section~4]{Seress}. 

This iterative process guarantees that every vertex $gB$ we consider in the
algorithm (that is, $xB$ or $sxB$ with $xB\in V$ and $s\in S$) is represented by
a product of generators in $S$, say $gB=s_{i_t}\cdots s_{i_1}B$ with each
$s_{i_j}\in S$. In particular, we can assume that, along the way, we have also
stored the product $s_{i_1}^{-1}\cdots s_{i_t}^{-1}$ with the vertex. This
allows us to compare $gB$ with $yB\in V$ by  deciding whether the product of
$(s_{i_1}^{-1}\cdots s_{i_t}^{-1})$ with $y$ lies in $B$ via the membership
test. This shows that the algorithm described so far requires $O(|S||G:B|^2)$
group multiplications and membership tests.  

Note that we add a vertex $yB$ to $V$ if and only if we add an edge to the
Schreier tree $T$. By construction, the representative chosen to store $yB$ is
the product of the labels $s_{i_t},\ldots,s_{i_1}$ of the unique path
$B\overset{s_{i_1}}{\to}s_{i_1}B\overset{s_{i_2}}{\to}\cdots
\overset{s_{i_r}}{\to} s_{i_r}\cdots s_{i_1}B$ from $B$ to $yB$ in the Schreier
tree. In particular, it follows that $yB=\bar y B$ where $\bar y=s_{i_1}\cdots
s_{i_t}$. Now if $g\in G$ is given, then we first run over $V$ to identify the
vertex $y B\in V$ with $gB=yB$; as shown above, since we assume that inverses of
vertex representatives are stored, we can find $yB$ by using $O(|G:B|)$ group
multiplications and membership test applications. Subsequently, we look for the
path from $B$ to $yB$ in the Schreier tree; this can be done in time $O(|G:B|)$.
The labels of this path express the element $\bar y$ as a product of the
generators; by construction, $gB=yB=\bar yB$, and we define $\bar g=\bar y$.

It remains to prove the last claim. Let $\mathcal{T}$ be the set of
  representatives in $G$ describing the vertices in $V$; note that
  $\mathcal{T}=\{\bar y  : yB\in V\}$. This is is a transversal for $B$ in $G$,
  that is, $G$ is the disjoint union of all $tB$ with $t\in\mathcal{T}$.
  Schreier's lemma \cite[Lemma 4.2.1]{Seress} shows that $B$ is generated by the
  set of all $(\overline{st})^{-1}st$ where $s\in S$ and $t\in\mathcal{T}$. The
  previous paragraph shows that for each such $s$ and $t$ we can compute
  $\overline{st}$ in time $O(|G:B|)$. Since $\overline{st}$ is the stored
  representative of the vertex $stB$,  we have already stored its inverse
  $(\overline{st})^{-1}$. This shows that we can compute a set of Schreier
  generators for $B$ using $O(|S||G:B|^2)$ group multiplications and membership
  tests.
\end{proof}

\begin{corollary}\label{corB}  
  Let $G=\langle S\rangle$ be a group of order $n\in\ourset$ with $|S|\in O(\log
  n)$. If $G$ has a normal Hall subgroup $B$ with $|G/B|\in (\log
  n)^{O((\log\log n)^2)}$, then there is an $\tilde{O}(n^{1+o(1)})$-time
  algorithm that returns generators for $B$, and a membership test for $B$ that
  decides in time $\tilde O(n)$.
\end{corollary}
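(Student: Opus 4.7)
My plan is to build a cheap membership test for $B$ directly from its Hall-subgroup structure, and then bootstrap that test into a generating set via Proposition~\ref{prop:CayleyGraph}.

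Let $b=|B|$ and $m=|G/B|$, so $\gcd(b,m)=1$ because $B$ is Hall. The first claim I would establish is
\[
B=\{g\in G : g^{b}=1\}.
\]
The forward inclusion is Lagrange's theorem inside $B$. For the reverse, if $g^{b}=1$ then the order of $gB\in G/B$ divides both $b$ and $m$, hence divides $\gcd(b,m)=1$, so $g\in B$. Since $B$ is identified by its defining prime set (for instance, $\bppi(n)$ from Theorem~\ref{thm:split} together with the factorization of $n\in\ourset$, which is easily computable given property~(b) and (d) of Definition~\ref{def:ups}), I would treat $b$ as known.

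The membership test I would implement is then: on input $g$, compute $g^{b}$ by repeated squaring and test whether the result equals the identity. The preprocessing listed at the start of Section~\ref{sec:found} supplies multiplication in $(\log n)^{O(1)}$ time and equality testing in $\tilde O(n)$ time, so the $O(\log n)$ squarings cost only $(\log n)^{O(1)}$ time and the single equality check against $1$ contributes $\tilde O(n)$. Altogether the test decides in $\tilde O(n)$, which is the second half of the claim.

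For the generating set, I would feed this test into Proposition~\ref{prop:CayleyGraph}. The hypotheses are exactly what the corollary guarantees: $S$ has size $O(\log n)$; the index $[G:B]=m\in (\log n)^{O((\log\log n)^{2})}$ fits the required form with $c=2$; and the $\tilde O(n)$ membership test fits comfortably within the proposition's $\tilde O(n^{1+o(1)})$ budget. The proposition then outputs a generating set for $B$ of size $(\log n)^{O((\log\log n)^{2})}$ in time $\tilde O(n^{1+o(1)})$, completing both claims. The only piece of real content is the Hall-exponent characterization of $B$; the rest is a direct invocation of Proposition~\ref{prop:CayleyGraph}, and I do not foresee any further obstacle.
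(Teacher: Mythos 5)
Your proof is correct and follows essentially the same route as the paper: characterize $B$ as $\{g\in G : g^{|B|}=1\}$, use fast exponentiation plus a single equality check for the $\tilde O(n)$ membership test, and then invoke \textsc{SchreierGraph} (Proposition~\ref{prop:CayleyGraph}) to extract generators in $\tilde O(n^{1+o(1)})$. Your justification of the membership characterization via the order of $gB$ in $G/B$ dividing $\gcd(b,m)=1$ is a bit more explicit than the paper's appeal to uniqueness of the normal Hall subgroup, but the substance is identical.
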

\begin{proof}
  Since a normal Hall subgroup is characteristic, hence unique,  $g\in G$ lies
  in $B$ if and only if the order of $g$ divides $b=|B|$; therefore we may test
  membership in $B$ by testing if $g^b=1$. This can be done in $O(\log b)$ group
  products using fast exponentiation, followed by a comparison with the identity
  $1$.  From our forgoing assumptions on $G$, this can be done in time
  $\tilde{O}(n)$. Finally, as $|S|\in O(\log n)$, generators of $B$ can be
  obtained from $\tilde{O}(|G:B|^2)$ group products and membership tests, using
  \textsc{SchreierGraph}$(S,B)$; all this can be done in time
  $\tilde{O}(n^{1+o(1)})$.
\end{proof}

We now introduce a tool that lets us find a complement to a normal Hall
$\bppi(n)$-subgroup.

\medskip
\mydes{BigSplit}
    {Given}{
        a group $G$ of order $n\in \ourset$,
    }
    {Return}{
        a subgroup $H\leq G$ such that $G=H\ltimes B$, where $B$ is the Hall 
        $\bppi(n)$-subgroup.
    }

\medskip 

To solve {\rm\textsc{BigSplit}} we need a brief detour into a generic model for
encoding groups via  presentations, cf.\ \cite[Section 1.4]{simsbook}:  
The \emph{free group} $F[X]$ on a given alphabet $X$ is formed by  creating a
disjoint copy $X^{-}$ of the alphabet and treating the elements of $F[X]$ as
words over the disjoint union $X\sqcup X^{-}$, including the empty word $1\notin
X\sqcup X^-$. Formally, one can replace the given set $X$ by $\{(x,1) : x\in
X\}$ and then defines $X^{-}=\{(x,-1) : x\in X\}$; to keep the notation simply,
we identify $x=(x,1)$ and $x^-=(x,-1)$. The empty word serves as the identity
and word concatenation is the group product; to impose the existence of inverses
we apply rewriting rules $xx^{-}\to 1$ and $x^{-}x\to 1$ for each  $x\in X$ and
corresponding $x^{-}\in X^{-}$. For a set $M$ we denote by  $M^X$ the set of
maps from $X$ to $M$, whose elements are naturally represented  as tuples
$\mathbf{m}=(\mathbf{m}_x)_{x\in X}$. For a group $G$, tuple $\mathbf{g}\in
G^X$, and word $w\in F[X]$, we assign an element $w(\mathbf{g})\in G$ by
replacing each variable $x^{\pm}$ in $w$ with the value $\mathbf{g}_x\in G$ and
$\mathbf{g}_x^{-1}\in G$, respectively, and then evaluating the corresponding
product in $G$.  The mapping $w\mapsto w(\mathbf{g})$ is a homomorphism
$\hat{\mathbf{g}}\colon F[X]\to G$, whose \emph{kernel}
$\ker\hat{\mathbf{g}}=\{w\in F[X]: w(\mathbf{g})=1\}$ is a normal subgroup of
$F[X]$. If $G$ is generated by the image $S=\{\mathbf{g}_x: x\in X\}$ and $R$
generates $\ker\hat{\mathbf{g}}$ as a normal subgroup, then the pair $\langle
S\mid R\rangle$ is a \emph{presentation} of $G$, where $R$ is a set of
\emph{relations} for $G$ relative to $S$.  Note that $\langle S\mid R\rangle$
carries all the information necessary to describe $G$ up to isomorphism;
however, in such an encoding isomorphism testing may be even become undecidable,
see \cite[Section~1.9]{simsbook}.

Our interest in presentations is to produce a relatively small number of
equations whose solutions help to solve {\rm\textsc{BigSplit}}; for that purpose
the following will suffice.

\begin{proposition}\label{prop:pres} 
  Let $G$ be a group of order $n\in\ourset$ with  Hall $\bppi(n)$-subgroup
  $B$. There is an $\tilde O(n^{1+o(1)})$-time
  algorithm to compute a presentation
  $\langle S\mid R\rangle$ of  $G/B$ such that  $|S|\in O(\log n)$
  and $|R|\in (\log n)^{O((\log\log n)^2)}$, and each $w\in R$ is a word in $S$
  of length $O(\log n)$. 
\end{proposition}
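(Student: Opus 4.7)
The plan is to extract a Reidemeister-Schreier presentation of $G/B$ from the Schreier coset graph of $B$ in $G$, using the smallness of $|G:B|$ and a careful choice of generators to control relation lengths.

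First, Corollary~\ref{corB} provides generators and an $\tilde O(n)$-time membership test for the Hall $\bppi(n)$-subgroup $B$. Applying Proposition~\ref{prop:CayleyGraph} to $(G,B)$ with the pre-computed $O(\log n)$-size generating set $S_0\subseteq G$ yields, in time $\tilde O(n^{1+o(1)})$, the Schreier coset graph $\Cay(S_0,B)$, a spanning Schreier tree, and coset representatives $\mathcal T$ with $|\mathcal T|=|G:B|=:m\in(\log n)^{O((\log\log n)^2)}=n^{o(1)}$. From this one reads off the coset action $G\to\mathrm{Sym}(\mathcal T)$, giving a faithful permutation representation of $G/B$ on $m$ points.

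I then refine the generating set as follows. Working inside $\mathrm{Sym}(\mathcal T)$, compute a composition series $1=N_0\lhd\cdots\lhd N_k=G/B$ in time $\mathrm{poly}(m)=n^{o(1)}$ by standard permutation-group methods \cite{Seress}, where $k=O(\log m)$. Since every finite simple group is 2-generated (CFSG), I pick a 2-element generating set of each factor $N_i/N_{i-1}$ and lift these to elements $y_1,\ldots,y_{2k}\in G/B$. To shorten exponentiated expressions I adjoin all binary powers $y_j^{2^i}$ for $0\leq i\leq\lceil\log m\rceil$, obtaining a generating set $S$ with $|S|\in O(k\log m)=O((\log\log n)^{O(1)})\subseteq O(\log n)$. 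Using the composition-series normal form plus binary-power expansion — and, within each nonabelian simple composition factor of polylog order, a brute-force shortest-word search — every element of $G/B$ admits a word in $S$ of length $O(\log n)$; in particular, for each $t\in\mathcal T$ I compute such a word $w_t\in F[S]$.

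Finally, for each pair $(s,t)\in S\times\mathcal T$ I determine $\overline{st}\in\mathcal T$ with $\overline{st}B=stB$ via the permutation representation and form the Schreier relation
\[
r_{s,t}\;=\;w_{\overline{st}}^{-1}\cdot s\cdot w_t\;\in\; F[S],
\]
which has length $O(\log n)$ and evaluates to the identity in $G/B$. There are $|S|\cdot m=(\log n)^{O((\log\log n)^2)}$ such relations, and by the Reidemeister-Schreier theorem \cite[Section~4]{Seress} the pair $\langle S\mid R\rangle$ with $R=\{r_{s,t}\}$ presents $G/B$. The total cost is dominated by the $\tilde O(n^{1+o(1)})$ bound from Proposition~\ref{prop:CayleyGraph}; all further work is polynomial in $m=n^{o(1)}$.

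The main obstacle is the length bound $O(\log n)$: a naive Schreier relation can be as long as the diameter of the Cayley graph of $G/B$, which may reach $m-1$. The composition-series/binary-powers refinement is what brings words down to the claimed length, and it crucially exploits that $|G:B|$ is polylogarithmic in $n$ and that each composition factor has order at most $m$. The most delicate part is producing short expressions inside each nonabelian simple composition factor in terms of its two generators; this is where the polylog bound on factor orders (arising from Definition~\ref{def:ups}) is essential.
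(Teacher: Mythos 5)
There is a genuine gap in the step where you produce short words inside the nonabelian simple composition factors. You choose a $2$-element generating set for each simple factor $T=N_i/N_{i-1}$, lift, adjoin binary powers, and then claim that a ``brute-force shortest-word search'' yields a word of length $O(\log n)$ for each element. But adjoining binary powers of the lifts only controls lengths inside cyclic subgroups; it does nothing for the diameter of a nonabelian $T$. With respect to two generators, the shortest word expressing a given element of $T$ could have length close to $|T|-1$, and the assertion that the diameter of a $2$-generated finite simple group is polylogarithmic in $|T|$ is essentially Babai's diameter conjecture, which is only known in special cases. Here the simple factors of $G/B$ can have order up to $m=|G:B|\in(\log n)^{O((\log\log n)^2)}$, which is far larger than $\log n$, so a search that merely \emph{finds} the shortest word does not guarantee one of length $O(\log n)$.

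The paper avoids this entirely. Rather than descending a composition series, it applies the unconditional ``cube'' construction of a shallow Schreier tree \cite[Lemma~4.4.2]{Seress} directly to the action of $G$ on the $m$ cosets of $B$. That step replaces the generating set by one of size $O(\log m)$ for which the resulting Schreier tree has depth $O(\log m)$, so every coset representative and every Schreier relator is automatically a word of length $O(\log m)\subseteq O(\log n)$, without any appeal to diameter bounds for simple groups. Your overall framework --- Schreier coset graph, transversal, Schreier relations, and the count $|R|\leq|S|\cdot m$ --- matches the paper's; replacing the composition-series and binary-powers refinement with the shallow-Schreier-tree step is what closes the gap.
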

\begin{proof}
  As shown above, we find $G=\langle S\rangle$ with $|S|\in O(\log n)$.   Use {\rm
  \textsc{SchreierGraph}} and Corollary~\ref{corB} to get a transversal
  $\mathcal{T}$ for $B$ in $G$,  generators for $B$, and a rewriting algorithm
  that given $g\in G$, finds a word $\bar{g}$ in $S$ with $\bar{g}B=gB$. Choose
  a set $X=\{x_g : g\in S\}$ of variables, and for each $g\in G$ define 
  $w_{g}\in F[X]$ as the word in $X\cup X^{-}$ produced by replacing each $u\in S$ in
  $\overline{g}$ with~$x_{u}$.  Now $R=\{w_t x_s w_{ts}^{-1} : t\in \mathcal{T},
  s\in S\}$ is a set of relations for $G/B$ relative to $\{sB : s\in S\}$, cf.\ 
  \cite[Exercise 5.2]{Seress}. Note that $|R|\leq |G:B|\cdot |S|$, and $|G:B|
  \in  (\log n)^{O((\log\log n)^2)}$ by Theorem~\ref{thm:split}. 
  Also computing $w_g$ is dominated by the time $\tilde{O}(|G:B|^2)$ it takes 
  to compute $\bar{g}$. We do this on $O(|S||G:B|)$ elements for a total 
  time of $(\log n)^{O((\log \log n)^2)}\in O(n^{o(1)})$.

  Our relators have length $O(|G:B|)$, but \cite[Lemma~4.4.2]{Seress} yields  an
  $\tilde{O}(n)$-time algorithm to replace such relators with ones of length
  $O(\log n)$, by constructing a \emph{shallow} Schreier tree and \emph{short}
  Schreier generators. The analysis in the proof of \cite[Lemma 4.4.2]{Seress}
  shows that all this can be done using $\tilde O(|G:B|^2)$ group
  multiplications and membership tests; the claim follows.  For more details
  justifying our complexity statement we refer to our proof of Theorem
  \ref{thm:isgrp}: there we provide additional details on an explicit
  application of \cite[Lemma 4.4.2]{Seress}.
\end{proof}
 
\begin{remark}  
  Babai-Luks-Seress, Kantor-Luks-Marks and others (see the bibliography in
  \cite{KLM} and \cite[Section~6]{Seress}) developed various algorithms to
  construct presentations of (quotient) groups of permutations.  Their
  complexities range from polynomial-time, to polylogarithmic-parallel (NC), to
  Monte-Carlo nearly-linear time, and they produce presentations that can be
  considerably smaller than what we obtain in Proposition \ref{prop:pres}.
  Though it is not necessary for our complexity goals, we expect that a better
  analysis and a better performing implementation would use
  such methods instead of our above brute-force approach. 
\end{remark}

\begin{proposition}\label{prop:bigsplit}
  {\rm\textsc{BigSplit}} is in time  $\tilde O(n^{1+o(1)})$.
\end{proposition}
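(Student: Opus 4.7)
The plan is to use the Schur--Zassenhaus philosophy: Theorem~\ref{thm:split} already guarantees that a complement $H$ exists and is unique up to conjugacy, so all we need is an efficient way to exhibit one. Because $B$ is cyclic of square-free order $b$ and $\gcd(a,b)=1$ where $a=|G/B|$, finding $H$ can be reduced to solving a linear system over $B\cong\mathbb{Z}/b\mathbb{Z}$.

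First, I would apply Corollary~\ref{corB} to produce an $O(\log n)$-size generating set and an $\tilde O(n)$-time membership test for $B$, and then apply Proposition~\ref{prop:pres} to obtain a presentation $\langle S\mid R\rangle$ of $G/B$ with $|S|\in O(\log n)$, $|R|\in(\log n)^{O((\log\log n)^2)}$, and each relator a word of length $O(\log n)$. This is the dominant step and costs $\tilde O(n^{1+o(1)})$; everything remaining will contribute only $n^{o(1)}$. Fix lifts $\tilde s\in G$ for each $s\in S$ using the Schreier transversal already produced.

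Second, I would linearise the problem. For each relator $w\in R$, the element $b_w:=w(\tilde s_1,\ldots,\tilde s_{|S|})$ lies in $B$ because $w$ evaluates to the identity in $G/B$. If we replace each lift by $\tilde s\cdot \beta_s$ with $\beta_s\in B$, then, using the normality of $B$ in $G$ and the commutativity of $B$, every $\beta$-factor can be commuted past the following lifts at the price of applying their conjugation action on $B$. Since $B$ is cyclic, this action is multiplication by a scalar in $(\mathbb{Z}/b\mathbb{Z})^*$, so the expanded word takes the form
\[ w(\tilde s_1\beta_{s_1},\ldots,\tilde s_{|S|}\beta_{s_{|S|}}) \;=\; b_w\cdot \Phi_w(\beta_{s_1},\ldots,\beta_{s_{|S|}}), \]
where $\Phi_w\colon B^{|S|}\to B$ is a $\mathbb{Z}/b\mathbb{Z}$-linear map whose coefficients are products of the conjugation scalars along~$w$. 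A complement is obtained by solving $\Phi_w(\boldsymbol\beta)=b_w^{-1}$ for all $w\in R$; by the Schur--Zassenhaus Theorem (cf.\ Remark~\ref{remGen}) the existence of a complement from Theorem~\ref{thm:split} already shows that this system is consistent.

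Third, I would solve the system prime by prime using the square-free factorisation $b=p_1\cdots p_t$, which is known from the definition of $\ourset$. For each $p_i\mid b$, the unique subgroup $B_i\leq B$ of order $p_i$ is $G$-invariant; a generator $\beta^{(i)}$ is obtained by raising any generator of $B$ whose projection is non-trivial to the power $b/p_i$, and one reads off the scalars $a^{(i)}_s\in\mathbb{F}_{p_i}^*$ defined by $\tilde s\beta^{(i)}\tilde s^{-1}=(\beta^{(i)})^{a^{(i)}_s}$. Building the $|R|\times|S|$ coefficient matrix over $\mathbb{F}_{p_i}$ requires walking once through each relator at cost $\tilde O(1)$ per letter, so total setup and Gaussian elimination are $n^{o(1)}$; summing over the $t\leq\log_2 n$ primes and reassembling the exponents via the Chinese Remainder Theorem yields $\boldsymbol\beta$. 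Then $H=\langle \tilde s\beta_s : s\in S\rangle$ is the desired complement, and the overall complexity is dominated by the presentation step at $\tilde O(n^{1+o(1)})$.

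The main obstacle I expect is the precise bookkeeping in the linearisation, specifically tracking the conjugation corrections incurred when moving each $\beta_s$ past the remaining lifts of~$w$ and verifying that $\Phi_w$ is well-defined modulo $b$ independent of spelling. Both issues are handled because $B$ is normal and $\mathrm{Aut}(B)$ is abelian (as $B$ is cyclic), so the scalars compose unambiguously in $(\mathbb{Z}/b\mathbb{Z})^*$. Everything else is routine once the linear structure is in place.
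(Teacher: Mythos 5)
Your proposal matches the paper's proof in all essentials. Both reduce \textsc{BigSplit} to obtaining a presentation $\langle S\mid R\rangle$ of $G/B$ via Proposition~\ref{prop:pres}, perturbing the chosen lifts $\tilde s_i$ by unknowns $m_i\in B$, and solving the resulting linear system over the cyclic group $B$; the paper's version is attributed to the \textsc{Complement} routine of \cite{KLM}, and it solves the integer system via Hermite Normal Form algorithms from \cite{hermite}, whereas you work prime by prime over the square-free factorization of $|B|$ and recombine via CRT, a minor implementation variant giving the same $n^{o(1)}$ cost for the linear-algebra step.
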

\begin{proof}
  \textsc{BigSplit} is solved via the function \textsc{Complement} discussed in
  \cite[Section 3.3]{KLM}; we briefly sketch the approach. Let $G=\langle
  S\rangle$ with $S=\{s_1,\ldots,s_d\}$ and $d\in O(\log n)$. Use the algorithm of
  Proposition~\ref{prop:pres} to get a  presentation $\langle x_1,\ldots,x_d\mid
  R\rangle$  for $G/B$, such that each $x_i=x_{s_i}$ as defined in the proof of
  Proposition \ref{prop:pres}.   Every complement $H$ to $B$, if it exists, is
  generated by $\{s_1m_1,\ldots,s_d m_d\}$ for some $m_1,\ldots,m_d\in B$, and
  such a generating set satisfies the relations in $R$, cf.\
  \cite[(2.2.1)]{rob}. We attempt to compute $m_1,\ldots,m_d$ by solving the
  system of equations resulting from $w(s_1m_1,\ldots, s_d m_d) =1$ with $w$
  running over $R$: recall that $w(s_1m_1,\ldots, s_dm_d)\in G$  is defined as
  $w(\mathbf{g})$ with $\mathbf{g}=(s_1m_1,\ldots,s_dm_d)$, see the remarks
  above Proposition \ref{prop:pres}. A complement to $B$ exists if and only if
  this equation system has a solution. Since each $w(s_1,\ldots,s_d)$ lies in
  the finite cyclic group $B$, this system can be described by an integral
  matrix with $\log n$ variables and  $(\log n)^{O((\log\log n)^2)}$ equations;
  using the algorithms of \cite{hermite}, it can be solved via  Hermite Normal
  Forms in time $(\log n)^{O((\log\log n)^2)}$.
\end{proof}

\section{Proofs of the main results}\label{sec:proofsmain}   

\subsection{Proof of Theorem~\ref{thm:main}: isomorphism testing}\label{appthm:main}

\begin{proof}[Proof of Theorem \ref{thm:main}]

  Given two binary maps $[n]\times[n]\to [n]$, we decide that $n\in\ourset$ in
  time $O(n)$, and we use Theorem \ref{thm:isgrp} to decide in time $\tilde
  O(n^2)$ whether these maps describe Cayley tables. If so, we have been given
  two groups $G$ and $\tilde G$ of order $n\in\ourset$, and we can use Theorem
  \ref{thm:iso} to find, in time $\tilde O(n^{1+o(1)})$, generators for
  subgroups $H,B\leq G$ and $\tilde H,\tilde B\leq \tilde G$ with $G=H\ltimes B$
  and $\tilde G=\tilde H\ltimes\tilde B$. Having generators of these subgroups,
  we can define homomorphisms $\theta$ and $\tilde\theta$ such that
  $G=H\ltimes_\theta B$ and $\tilde G=\tilde H\ltimes_{\tilde \theta}\tilde B$.
  Since $n\in\ourset$, we know that $B$ and $\tilde B$ are cyclic, hence $B$ and
  $\tilde B$ are isomorphic if and only if $|B|=|\tilde B|$. Moreover, since
  $|H|,|\tilde H|\leq (\log n)^{O((\log \log n)^c)}$ we can test isomorphism
  $H\cong \tilde H$ using brute-force   
  methods in time  $(\log n)^{O((\log \log n)^d)}\subseteq \tilde O(n^{1+o(1)})$. If $H\cong \tilde H$ and 
  $B\cong \tilde B$ is established, then we can identify $H=\tilde H$ and $B=\tilde B$, 
  and test $G\cong \tilde G$ by using Lemma \ref{lemcomp}: since $B$ is cyclic, $\Aut(B)$
  is abelian, and so Condition \eqref{def:compatible} reduces to
  $\tilde\theta(\alpha(h))=\theta(h)$ for all $h\in H$, which we can test by
  enumerating $\Aut(H)$ and looking for a suitable $\alpha$; since  $|H|$ is
  small, such a brute-force enumeration is efficient and in $\tilde O(n^{1+o(1)})$.
\end{proof}

\begin{remark}\label{remUps} 
  We briefly explain how to decide $n\in\Upsilon$ on
  a Turing machine. Computing the table of primes between $1$ and $n$ on a
  Turing machine can be done in time $O(n(\log n)^2 \log\log n)$, see \cite[p.\
  227]{schonhage}, and the number of primes is clearly bounded by $n$.
  Approximating an upper bound for $\log n$ can be done in $O(n)$ by counting
  the bits that are required to represent $n$. That integer arithmetic for
  integers $n$ can be done in $(\log n)^{O(1)}$  on multi-tape Turing machines
  follows from \cite[Chapter III.6]{schonhage}. It follows that $n\in\Upsilon$
  can be decided in time $\tilde O(n^2)$.  
\end{remark}

\subsection{Proof of Theorem~\ref{thm:isgrp}:  recognizing groups}\label{app:isgroup}

Similar to \cite[Chapter 2]{latin}, our strategy for recognizing groups uses
Cayley's Theorem \cite[(1.6.8)]{rob}. The latter implies that the rows of an
$n\times n$ group table can be interpreted as permutations which form a
\emph{regular} permutation group on $[n]$, that is, the group is transitive on
$[n]$ and has trivial point-stabilizers, see \cite[Section~1.2.2]{Seress}.

A new idea is to exploit that groups of order $n$ can be
specified by generating sets of size $\log n$, so some $\log n$ rows determine
the entire table.  Once the input is verified to be a latin square, our approach
is to define a permutation group generated by $O(\log n)$ rows, and then compare
its Cayley table with the original table.  In more abstract terms, our algorithm
creates an instance of an abstract permutation group data type, as defined in
\cite[Section~3]{Seress}. That data type is guaranteed to be a group and
so the promise is converted into a computable type-check: We confirm that the
group we create in this new data type is the one specified by the original
table; the proof given below makes this argument specific. This methodology of
removing a promise by appealing to a type-checker generalizes; we refer to  our
forthcoming work \cite{glassbox} for more details.

\begin{proof}[Proof of Theorem  \ref{thm:isgrp}] 
Let $\ast\colon [n]\times[n]\to [n]$ be the multiplication defined by the
table~$T$; recall our assumption  that rows and columns are both labelled by
$1,2,\ldots,n$. If the table is not \emph{reduced}, that is, if the first row or
the first column do  not contain $1,2,\ldots,n$ in order, then ``$1$'' is not an
identity and we return false.  Running over the table, we can check in time
$\tilde O(n)$ whether any particular row contains distinct symbols only, see for
example \cite[p.\ 434]{pet}; once this is confirmed for all $n$ rows, the rows
of $T$ describe permutations of $[n]$. All this preprocessing can be done in
time $\tilde O(n^2)$. We verify in the proof below that also each column of $T$
contain distinct symbols only (and return false if this is not the case); then
$L=([n],\ast)$ describes a loop with identity $1$; recall that a loop is a
quasigroup (that is, a set with a binary operation whose multiplication table is
a Latin square) that has an identity.

  For $i\in[n]$ denote by $\lambda_i\in{\rm Sym}_n$ the map $[n]\to [n]$ defined
  by left multiplication $\lambda_i(a)=i\ast a$; since $T$ is reduced,
  $\lambda_i$ is given by the $i$-th row of $T$. We define \[\Lambda=\langle
  \lambda_i : i \in [n]\rangle;\] since each $\lambda_i(1)=i$, this is a
  transitive subgroup of ${\rm Sym}_n$. Note that if one of the columns of $T$
  contains duplicate symbols, then $\Lambda$ is not regular. It follows that $L$
  is a group if and only if $\Lambda$ is a regular permutation group on $[n]$,
  if and only if  $\lambda_i\lambda_j=\lambda_{i\ast j}$ for all $i,j\in[n]$,
  see \cite[Theorems~2.16 \& 2.17]{latin}.   Since $\Lambda$ is transitive, it
  is regular if and only if the stabiliser $\Lambda_1$ of $1\in[n]$ in $\Lambda$
  is trivial. We now show how to find a generating set of $\Lambda$ of size
  $O(\log n)$ and prove that $\Lambda$ is regular, or establish that $L$ is not
  a group and return false.

We first introduce some notation. For a subset  $S\subseteq[n]$ define
  \[\Lambda(S)=\langle\lambda_i : i \in S\rangle\leq \Lambda.\]  Let
  $\Lambda(S)(1)=\{\lambda(1) : \lambda\in \Lambda(S)\}$ and
  $\Lambda(S)_1=\{\lambda\in \Lambda(S) : \lambda(1)=1\}$ be the orbit and
  stabiliser of~$1$ in $\Lambda(S)$, respectively.  We describe how to use the
  orbit-stabiliser algorithm (cf.\ the proof of Proposition
  \ref{prop:CayleyGraph})  to get $\Lambda(S)(1)$, a generating set for
  $\Lambda(S)_1$ (so-called Schreier generators), and for each
  $x\in\Lambda(S)(1)$ an transversal element $\lambda_{(x)}\in\Lambda(S)$ with
  $\lambda_{(x)}(1)=x$. We proceed in two steps.

  The first step is to find a subset $S\subseteq [n]$ of size $O(\log n)$ such
  that $\Lambda(S)$ is transitive on $[n]$. We start by choosing a subset
  $S\subseteq[n]$ of size $O(\log n)$ and by   copying $\{\lambda_i : i \in S\}$
  to a separate tape; we can assume that $1\in S$. In the following we will
  mainly work with this short tape; scanning it  takes time $\tilde O(n)$. With
  this assumption, we can compute $\Lambda(S)(1)$ in time $\tilde O(n^2)$ by
  using the usual orbit enumeration of the orbit-stabiliser algorithm, see the
  proof of Proposition \ref{prop:CayleyGraph}. If $\Lambda(S)(1)\ne [n]$, then
  we find $j\in [n]\setminus \Lambda(S)(1)$, replace $S$ by $S\cup
  \{\lambda_j\}$, and iterate.  By  construction, $\lambda_j(1)=j$, so the new
  orbit and hence the new group $\Lambda(S)$ have increased in size. We can
  increase $|\Lambda(S)|$ only $O(\log n)$ times, so we repeat the work of the
  previous paragraph $O(\log n)$ times. In conclusion, in time $\tilde O(n^2)$
  we can find $S\subseteq [n]$ of size $O(\log n)$ with $\Lambda(S)$ transitive
  on $[n]$. 

The second step is to use the method described in the proof of
Proposition~\ref{prop:pres} (based on \cite[Lemma~4.4.2]{Seress}) to replace $S$
by a new set $S$ (also of size $O(\log n)$) that yields short transversal
elements (each of length $O(\log n)$) and short Schreier generators. To describe
the construction of \cite[Lemma~4.4.2]{Seress}, we need more notation. For
elements $g_1,\ldots,g_k\in \Lambda(S)$ with $k\in O(\log n)$ we define $C_k =
\{g_k^{e_k}\ldots g_1^{e_1} : \text{ each }e_i\in \{0,1\}\}$ and the
corresponding orbit $O_k=(C_kC_k^{-1})(1)$, and call $C_k$ nondegenerate if
$|C_k|=2^k$. As explained on \cite[p.\ 65]{Seress}, the set $O_k$ can be
constructed by starting with $\Delta_0=\{1\}$ and recursively defining
$\Delta_i=\Delta_{i-1}\cup \{h_i(c) : c \in \Delta_{i-1}\}$, where $h_i$ is the
$i$-th element in the list $g_k^{-1},\ldots,g_1^{-1},g_1,\ldots,g_k$. With this
construction, $O_k = \Delta_{2k}$. For every $x\in O_k$ constructed this way we
also store one tuple $[x; j_1,\ldots,j_\ell]$, meaning that
$\lambda_{(x)}=h_{j_1}\cdots h_{j_\ell}$ was one element we used to construct
$x=\lambda_{(x)}(1)$ in $O_k$. These elements $\lambda_{(x)}$ will be the new
short transversal elements; recall that  $\ell\in O(\log n)$ by construction. If
the elements $g_1,\ldots,g_k,g_1^{-1},\ldots,g_k^{-1}$ are stored on a separate
tape (which we can scan in time $\tilde O(n)$), the set $O_k$ can be constructed
in time $\tilde O(n^2)$; note, however, that we have not yet computes
$\lambda_{(x)}$, but rather stored the tuple  $[x;j_1,\ldots,j_\ell]$
describing~it.

  Now we follow the proof of \cite[Lemma~4.4.2]{Seress} to find
  $g_1,\ldots,g_k\in \Lambda(S)$ with $k\in O(\log n)$ such that $C_k$ is
  nondegenerate and $|O_k|=n$. We start by choosing a non-identity generator
  $g_1\in S$. If  $g_1,\ldots,g_i$ are found such that $C_i$ is nondegenerate,
  but $|O_i|<n$, then  there exists $j\in O_i$ and $s_r\in S$ such that
  $s_r(j)\notin O_i$. Now define $g_{i+1}=s_r\lambda_{(j)}$ (multiplied in time
  $\tilde O(n^2)$), where $\lambda_{(j)}$ is the stored transversal element for
  $j\in O_i$, and iterate until all required conditions are met. We have to do
  at most $O(\log n)$ iterations, and correctness follows from the proof of
  \cite[Lemma~4.4.2]{Seress}. In conclusion, in time $\tilde O(n^2)$ we have
  found a new generating set $\{g_1,\ldots,g_k,g_1^{-1},\ldots,g_k^{-1}\}$ of
  size $O(\log n)$ of the original group $\Lambda(S)$, and we have also
  constructed for each $x\in[n]$ a tuple $[x; j_1,\ldots,j_\ell]$ of length
  $O(\log n)$ encoding a short transversal element $\lambda_{(x)}$ mapping $1$
  to $x$. Let $\mathcal{T}$ be the set of all these tuples, and note that we can
  scan $\mathcal{T}$ in time $\tilde O(n)$.

  To simplify the notation, in the following we assume that the new generating
  set is again stored on a separate short tape and indexed by $S\subseteq [n]$
  so that  $\{g_1,\ldots,g_k,g_1^{-1},\ldots,g_k^{-1}\}=\{\lambda_s : s\in S\}$:
  using the short tape, we evaluate each $g_i^\pm(1)=u$ and then scan the
  original table $T$ in time $\tilde O(n^2)$ to confirm $g_i^\pm(1)=\lambda_u$
  and to add $u$ to $S$; if we find out that $g_i^\pm(1)\ne \lambda_u$, then
  $\Lambda(S)$ is not regular and we return false.

  We now construct a set $R$ of relations that  encode  Schreier generators for
  the stabiliser $\Lambda(S)_1$. For each $[x;j_1,\ldots,j_\ell]\in\mathcal{T}$
  and for each $r\in S$ evaluate $y=\lambda_r(x)$ and scan $\mathcal{T}$ for the
  entry $[y;v_1,\ldots,v_m]$; this means that $\lambda_{v_1}\cdots
  \lambda_{v_m}(1)=y=\lambda_r\lambda_{j_1}\ldots \lambda_{j_\ell}(1)$, and we
  add $[r,j_1,\ldots,j_\ell;v_1,\ldots,v_m]$ to $R$. The whole of $R$ can be
  construced in time $\tilde O(n^2)$; moreover, $R$ has $\tilde O(n)$ entries of
  length $O(\log n)$ each. 
  
Next, we use $R$ to check that the stabiliser $\Lambda(S)_1$ is trivial. It
follows from Schreier's Lemma \cite[Lemma 4.2.1]{Seress} that $\Lambda(S)_1$ is
generated by the Schreier generators encoded by the relations stored in $R$;
more precisely, the stabiliser is trivial  if and only if for each of the
$\tilde O(n)$ relations $[j_1,\ldots,j_\ell; v_1,\ldots,v_m]$ in $R$ we have
$\lambda_{j_1}\cdots \lambda_{j_\ell}=\lambda_{v_1}\cdots \lambda_{v_m}$. (At
this stage we only know that both sides of this equation map $1$ to the same
element.) Note that by construction $\ell,m\in O(\log n)$ and each generator
$\lambda_s$ is stored on a short tape that we can scan in time $\tilde O(n)$.
We now argue that we can multiply two permutations given by $n$-tuples in
time $\tilde O(n)$: recall that we represent a permutation $a\in {\rm
Sym}_n$ as the $n$-tuple $[1^a,\ldots,n^a]$; labelling these entries by
$1,\ldots,n$, and applying the aforementioned $\tilde O(n)$-time sorting
algorithm (see  \cite[p.\ 434]{pet}), we can compute the following
transformation (where the top row indicates the labels):\enlargethispage{2ex}
\[a=\left[\begin{array}{c|cccc} 
   i& 1,&2,&\ldots ,& n\\
   i^a&  1^a,&2^a,&\ldots,&n^a
  \end{array}\right] \quad\to\quad
  a=\left[\begin{array}{c|cccc}
   i& 1^{a^{-1}},&2^{a^{-1}},&\ldots ,& n^{a^{-1}}\\
    i^a& 1,&2,&\ldots,&n \end{array}\right];\]     
    this is done by simply sorting
    $[1^a,\ldots,n^a]$ and keeping track of the labels. Now if $b$ is a second
    permutation represented as $[1^b,\ldots,n^b]$, then $ab$ is the permutation
    represented by 
{\small
\[ab=\left[\begin{array}{c|cccc}
   i& 1^{a^{-1}},&2^{a^{-1}},&\ldots ,& n^{a^{-1}}\\
    i^a& 1,&2,&\ldots,&n
  \end{array}\right]\cdot
\left[\begin{array}{c|cccc}
   i& 1,&2,&\ldots ,& n\\
   i^b&  1^b,&2^b,&\ldots,&n^b
  \end{array}\right]
=
\left[\begin{array}{cccc}
    1^{a^{-1}},&2^{a^{-1}},&\ldots ,& n^{a^{-1}}\\
    1^b,&2^b,&\ldots,&n^b \end{array}\right]; \]} 
  sorting the right hand array
   by its first row (again in time $\tilde O(n)$), we obtain the $n$-tuple
   describing $ab$. In conclusion, in time $\tilde O(n)$ we can compute the
   elements  $\lambda_{j_1}\cdots \lambda_{j_\ell}$ and $\lambda_{v_1}\cdots
   \lambda_{v_m}$, and check for equality. Since $|R|\in \tilde O(n)$, checking
   that $\Lambda(S)_1$ is trivial can be done in time $\tilde O(n^2)$. If
   $\Lambda(S)_1$ is not trivial, then $\Lambda(S)$ is not regular and we return
   false. Otherwise, we have proved that $\Lambda(S)$ is a regular permutation
   group on $[n]$, contained in $\Lambda$.

It remains to show that $\Lambda\leq \Lambda(S)$. For this it is sufficient to
show that each $\lambda_x\in \Lambda(S)$; because $\Lambda(S)$ is regular, the
latter holds if and only if $\lambda_x=\lambda_{(x)}$ is the constructed
transversal element. For this we first sort $\mathcal{T}$ such that its elements
correspond to $\lambda_{(1)},\ldots,\lambda_{(n)}$; note  that $\mathcal{T}$ has
$n$ entries of length $O((\log n)^2)$ bits, so we can sort $\mathcal{T}$ in time
$\tilde O(n^2)$. Once sorted, we replace each entry $[x;j_1,\ldots,j_\ell]$ by
the permutation $\lambda_{(x)}$; the latter has already been constructed above
by multiplying $\lambda_{j_1}\cdots \lambda_{j_\ell}$. Lastly, in time $\tilde
O(n^2)$ we scan the original table $T$ and compare each
$\lambda_x=\lambda_{(x)}$. 
\end{proof} 

Some parts of the latter proof are similar to the proof
of Proposition \ref{prop:CayleyGraph}. One difference is that in the proof of
Proposition \ref{prop:CayleyGraph} checking whether an element $xB$ lies in the
orbit $V$ requires $|V|$ group multiplications and membership tests; in the
proof above, this only requires running over a list $\mathcal{T}$ of size $n$
and comparing numbers. In Proposition \ref{prop:CayleyGraph} the orbit is small
of size $|G:B|$, but applying the membership is more expensive; in the proof
above, the orbit is large of size $n$, but checking equality is more efficient.

\begin{remark}\label{rem:fast-multi} Having converted our Cayley table into a
regular permutation group $G$, we now have a generating set $S$ of size $O(\log
n)$ and a corresponding Schreier tree; both  are stored in a separate tape of
length $\tilde O(n)$.  This is the tape that is used whenever we operate in the
group; the original input tapes will never be revisited.   As mentioned
in the proof of Theorem \ref{thm:isgrp}, we may
assume that the Schreier trees are \emph{shallow}, that is, they have depth
bounded by $O(\log n)$, see \cite[Lemma~4.4.2]{Seress}. The algorithms we now
use are as follows. Each $g\in G$ is a node in the Schreier tree and there is a
unique path from the origin to $g$; if $g_1,\ldots,g_k\in S$ are the labels of
that path, then $g=g_k\cdots g_1$; note that $k\in O(\log n)$ since the Schreier
tree is shallow. We now describe how to compute the product of these labels.
Recall that $g_1,\ldots,g_k$ are permutations on $[n]$, so we can compute the
image of $1\in[n]$ under $g_k\cdots g_1$, by looking up $i_1=g_1(1)$,
$i_{2}=g_{2}(i_1)$, etc, until we obtain $u=g_k\cdots g_1(1)$. This scan occurs
on the short tape in time $\tilde O(n)$. Since the group is regular, there is a
unique $g\in G$ with $u=g(1)$, which  determines $g=g_k\cdots g_1$.  To multiply
elements $g_k\cdots g_1$ and $g_j'\cdots g_1'$, we merely concatenate the
generators, and continue with this word, yielding a $(\log n)^{O(1)}$-time
multiplication. We note that none of our product lengths exceed $(\log
n)^{O(1)}$. To compare $g_k\cdots g_1$ and $g_j'\cdots g_1'$ with $k,j\in O(\log
n)$, we determine and compare $g_k\cdots g_1(1)$ and $g_j'\cdots g_1'(1)$ in
time $\tilde{O}(n)$. More details of these methods are given in \cite[p.\
85--86]{Seress}.   
\end{remark}

\section{Conclusion and outlook}\label{sec:outlook}

\noindent We have shown that when restricted to a dense set $\ourset$ of group
orders, testing isomorphism of groups  of order $n\in\ourset$ given by Cayley
tables can be done in time $\tilde{O}(n^2)$; this significantly improves the
known general bound of $n^{O(\log n)}$. We note that $|\ourset\cap
\{1,2,\ldots,10^k\}|/10^k$ is  approximately  0.723, 0.732, and 0.786 for
$k=8,9,10$, respectively; to determine whether $n\in \upsilon$,  all logarithms
have been computed with respect to the basis $2$. Recall that  one can decide if
$n\in \ourset$ in time $\tilde{O}(n^2)$, see Remark~\ref{remUps}. We note that
our work \cite{PART2} considers isomorphism testing of cube-free groups, but
under the assumption that groups are given as permutation groups.

We have proved that groups of these orders admit a computable factorisation
$G=H\ltimes B$ with the following useful property: firstly, the
\emph{\textbf{H}ard group theory} of $G$ is captured in $H$, but $|H|$ is
\emph{small} compared to $|G|$ so brute-force methods can be applied to $H$;
secondly, the \emph{\textbf{B}ig number theory} of $|G|$ is captured by $|B|$,
but $B$ is cyclic, hence its  group theory is \emph{easy}. These decompositions
exist for a dense set of group orders, so we expect this will be useful for
other computational tasks as well. In fact, we will exploit properties of these
decompositions in future work:  This paper is part of our program  to enhance
group isomorphism, see \cite{cgroups,PART2} for recent work, and we plan to
extend the present results to other input models. Specifically, in our current
work \cite{glassbox} we develop a new black-box input model for groups (based on
Type Theory)  that does not need a \emph{promise} that the input really encodes
a group, so algorithms for this model  can be implemented within the usual
polynomial-time hierarchy. Due to Theorem~\ref{thm:isgrp}, the algorithms
presented here do also not require a promise that the input tables describe
groups. We conclude by mentioning that our algorithm for isomorphism testing can
be adapted to find a single isomorphism, generators for the set of all
isomorphisms, or to prescribe a canonical representative of the isomorphism type
of a single group.

\section*{Acknowledgments}

\noindent Both authors thank Joshua Grochow and Youming Qiao for comments on
earlier versions of this draft. They also thank the Newton Institute
(Cambridge, UK) where some of this research took place and we recognize the
funding of EPSRC Grant Number EP/R014604/1. Dietrich was supported by an
Australian Research Council grant, identifier DP190100317. Wilson was supported
by a Simons Foundation Grant, identifier \#636189.  Special thanks to Shyan Akmal
for referring us to the work of Rajagopalan--Schulman.


   \appendix

\section{Proofs of Theorems \ref{thm:dense}--\ref{thm:iso}}\label{app:proofs}

\subsection{Number theory: Proof of Theorem~\ref{thm:dense}}\label{app:dense}

\begin{proof}[Proof of Theorem~\ref{thm:dense}]
Erd\H{o}s-P\'alfy \cite[Lemma~3.5]{ErdosPalfy} showed that almost every $n\in\mathbb{N}$
	has the property that  if a prime $p>\log\log n$ divides $n$, then $p^2\nmid
	n$; thus,  $n=p_1^{e_1}\ldots p_k^{e_k}b$ with $b$ square-free, every prime
	divisor of $b$ is greater than $\log\log n$, and $p_1,\ldots,p_k\leq \log\log
	n$ are distinct primes. Let $x>0$ be an integer. We now compute an estimate
	for the number $N(x)$ of integers $0<n\leq x$ which are divisible by a prime
	$p\leq \log \log n$ such that the largest $p$-power $p^e$ dividing $n$
	satisfies $p^e>\log n$. We want to show that $N(x)/x \to 0$ for $x\to \infty$;
	this proves that for almost all integers $n$, if $p^e\mid n$ with $p\leq \log
	\log n$, then $p^e\leq \log n$. To get an upper bound for $N(x)$, we consider
	integers between $\sqrt{x}$ and $x$ with respect to  
	the above property, and add $\sqrt{x}$ for all integers between $1$ and
	$\sqrt{x}$. Note that if $p^e\geq \log n$, then $e\geq \log\log n / \log p$.
	Since we only consider $\sqrt{x}\leq n\leq x$, this yields $e\geq c(x)$ where
	$c(x)= \log\log \sqrt{x}/\log\log\log x$. Note that $c(x)\to \infty$ if $x\to
	\infty$, thus
	\begin{eqnarray*}
		N(x) &\leq & \sqrt{x} + \sum\nolimits_{k=2}^{\lfloor \log \log \sqrt{x}\rfloor}\frac{x}{k^{c(x)}}
		\;\;\leq\;\;  \sqrt{x} + x \int_2^{\log\log\sqrt{x}} \frac{1}{y^{c(x)}}\text{d}y\\
		&=& \sqrt{x} +  \frac{x}{1-c(x)}\left[\frac{1}{(\log\log\sqrt{x})^{c(x)-1}}-\frac{1}{2^{c(x)-1}}\right].
	\end{eqnarray*}
	Since $1/(1-c(x))\to 0$ from below, we can estimate:
	\begin{eqnarray*}
		N(x) 
		&\leq & \sqrt{x} +  x\left|\frac{1}{1-c(x)}\right|\left[-\frac{1}{(\log\log\sqrt{x})^{c(x)-1}}+\frac{1}{2^{c(x)-1}}\right]		
    \leq \sqrt{x} +  x \left|\frac{1}{1-c(x)}\right|\left[\frac{1}{2^{c(x)-1}}\right];
	\end{eqnarray*}
thus $N(x)=o(x)$, since $N(x)/x \leq \sqrt{x}/x +
  \left|1/(1-c(x))2^{c(x)-1}\right|\to 0$ if $x\to \infty$. This proves that the
  set $\ourset_1$ of all positive integers satisfying conditions a,c) in
  Definition \ref{def:ups} is dense.  By \cite[Lemmas 3.5 \& 3.6]{ErdosPalfy},
  the set $\ourset_2$ of positive integers $n$ satisfying conditions b,c) is
  dense as well. An inclusion-exclusion argument proves that
  $\ourset_3=\ourset_1\cap\ourset_2$ is  dense. The Hardy-Ramanujan Theorem
  \cite[Section 8]{hardy} proves that the set  $\Upsilon_4$ of integers $n$ that
  have at most $2\log \log n$ distinct prime divisors is dense, and now the
  intersection $\ourset_3\cap\ourset_4$ is dense as well. Since
  $\ourset_3\cap\ourset_4\subseteq \ourset$, the claim follows.
\end{proof}

\subsection{Splitting theorems: Proofs of Theorems~\ref{thm:split} \& \ref{thm:iso}}
\label{appthm:split}

\begin{proof}[Proof of Theorem~\ref{thm:split}]
Let $G$ be a group of order  $n\in\ourset$; we first show that $G$ has a normal
Hall $\bppi(n)$-subgroup, and  if $G$ is solvable, then
there is a normal Hall $\ppi(n)$-subgroup.

\medskip

First, let  $G$ be solvable. We show that $G$ has a normal Sylow
  $p$-subgroup for every $p\in \ppi(n)$. Let $q\ne p$ be a prime dividing $n$, and
  let $A$ be a Hall $\{p,q\}$-subgroup of $G$ of order $p^eq^f$; see
  \cite[Section 9.1]{rob}.  The Sylow Theorem \cite[(1.6.16)]{rob} shows that the  number $h_p$ of Sylow
  $p$-subgroups of $A$ divides $q^f$ (and hence $n$) and $p\mid (h_p-1)$.  Since
  $p\mid n$ is isolated, we have $h_p=1$ and  $A$ has a normal Sylow
  $p$-subgroup. Now fix a Sylow basis $\mathcal{P}=\{P_1,\dots,P_s\}$  for $G$,
  that is, a set of Sylow subgroups, one for each prime dividing $n$, such that
  $P_iP_j=P_jP_i$ for all $i$ and $j$; see \cite[Section~9.2]{rob}. Let $P=P_u$
  be the Sylow $p$-subgroup for $G$ in $\mathcal{P}$. Since $G=P_1\cdots P_s$,
  every $g\in G$ can be written as $g=g_1\ldots g_s$ with each $g_j\in P_j$.
  Since $PP_j=P_j P$, the group $PP_j$ is a Hall $\{p,p_j\}$-subgroup. As shown
  above, $P\unlhd PP_j$, so each $g_jP=Pg_j$. Thus,  $gP=g_1\ldots   g_s P =
  Pg_1\ldots g_s=Pg$, so $P\unlhd G$.

\medskip
  
 Second, suppose that $G$ is non-solvable. We show that $G$ has a normal Sylow
  $p$-subgroup for every $p\in\bppi(n)$. Being non-solvable, $G$  has a
  non-abelian simple composition factor, so $|G|$ is divisible by~$4$, see
  \cite[p.\ 155]{feit}. Since $n\in\ourset$, Definition \ref{def:ups}c)  implies that
  $2^{\nu_2(n)}\leq \log n$, so $\nu_2(n)\leq \log\log n<p$ for every
  $p\in\bppi(n)$.

  In the following we use the Babai-Beals filtration $G\geq {\rm PKer}(G)\geq
  {\rm Soc}^\ast(G)\geq O_\infty(G)\geq 1$, see \cite[Section 1.2]{babaibeals}:
  here $O_\infty(G)$ is the largest normal subgroup of $G$ and ${\rm
  Soc}^*(G)/O_\infty$ is the socle of $G/O_\infty(G)$, which is defined to be
  the subgroup generated by all minimal normal subgroups. This socle decomposes
  as  $T_1\times\cdots \times T_{\ell}$ where each $T_i$ is a non-abelian simple
  normal subgroup of $G/O_\infty(G)$,  see also \cite[pp.\ 157--159]{Seress}.
  The group  ${\rm PKer}(G)/O_\infty(G)$ is the kernel of the  permutation
  representation $G/O_\infty(G)\to {\rm Sym}_{\ell}$ induced by the conjugation
  action on $\{T_1,\ldots,T_{\ell}\}$.

 First, we claim that $p\nmid |G:O_{\infty}(G)|$. Note that $p\nmid |T_i|$ for each $i$ since
  $p\in\bppi(n)$. By the Classification of Finite Simple Groups, a prime $r$
  divides $|\Aut(T_i)|$ only if $r$ divides $|T_i|$, as seen from the list of
  known orders\footnote{The finite simple groups (Classification Theorem of
  Finite Simple Groups) are listed in \cite[Section 1.2]{wilson}; the orders of
  these groups and the orders of their automorphism groups are described in
  various places in said book. Simply for the convenience of the reader, we
  refer to {\tt en.wikipedia.org/wiki/List\_of\_finite\_simple\_groups\#Summary}
  for a concise list of these orders.} of simple groups and their outer
  automorphism groups. Note that ${\rm PKer}(G)/O_\infty(G)$ embeds into
  $\Aut(T_1)\times\ldots\times\Aut(T_\ell)$. Assume, for a contradiction, that
  $p$ divides $|G:O_{\infty}(G)|$. By assumption, $p\nmid |\Aut(T_i)|$ for each
  $i$, which forces $p\mid |G:{\rm PKer}(G)|$ and  $p\leq \ell$. Every $T_i$ has
  even order by the Odd-Order Theorem, see \cite[p.\ 2]{wilson}, so $2^\ell\mid
  n$ and $\nu_2(n)\geq \ell$; now $p\leq \ell$ contradicts $p>\nu_2(n)$, which
  we have shown above. This forces $p\nmid |G:O_\infty(G)|$, so  the Sylow
  $p$-subgroup $P$ lies in $O_\infty$. Since $p$ is also isolated in
  $|O_{\infty}(G)|$, the proof of the solvable case shows $P\unlhd
  O_{\infty}(G)$. Since $O_\infty(G)$ is characteristic in $G$, we know that $P$
  is  normal in~$G$.

  In conclusion, for every $p\in \bppi(n)$ there is a normal Sylow
  $p$-subgroup $G_p$ in $G$. Since $n\in\ourset$ and $p>\log\log n$, this
  subgroup is cyclic of size $p$. If  $p,q\in\bppi(n)$ are distinct, then
  $G_p,G_q\unlhd G$ implies that $G_qG_p\cong G_p\times G_q\cong C_{pq}$ is cyclic
  of order $pq$. Thus, $G$ has a normal cyclic Hall
  $\bppi(n)$-subgroup $B$, and  $G=H\ltimes B$ for some $H\leq G$ by the
  Schur-Zassenhaus Theorem \cite[(9.1.2)]{rob}.

\medskip
  
It remains to prove that $|H|\leq  (\log n)^{O((\log\log n)^c)}$ for some $c$.
Recall from Definition \ref{def:ups} that $n=ab$ such that $a\leq (\log n)^{\log
\log n}$ and if $p\in \bppi(n)$, then $p\mid b$ and $p^2\nmid n$; moreover, $b$
is squarefree and has at most $2\log \log n$ prime divisors. We have $|H|=ab/z$,
where $z$ is the product of the primes in $\bppi(n)$. It remains to show that
$b/z \leq  (\log n)^{O((\log\log n)^c)}$. If $p$ is a prime divisor of $b/z$,
then $p>\log\log n$ and $p\notin\bppi(n)$, that is, $p^2\nmid n$ and $p\mid n$
is isolated, but not strongly isolated. By definition, this means that there is
some non-abelian simple group $T$ of order dividing $n$ with $p\mid |T|$. We
show below that each such $T$ satisfies $|T|\leq (\log n)^{O((\log \log n))}$,
in particular, $p\leq (\log n)^{O((\log \log n))}$. Since $b/z$ has at most
$2\log \log n$ different prime divisors, we deduce $b/z \leq (\log n)^{O((\log
\log n)^2)}$; this then completes the proof of the
theorem.

 One can see from the known factorized orders of the finite non-alternating non-abelian
 simple groups that every such group  $T$ has a \emph{distinguished} prime power
 divisor $r^m \mid |T|$ with $m>1$ and $|T|\leq (r^m)^{O(m)}$: This is trivially
 true for the 26 sporadic groups; for the other non-alternating simple groups
 this follows because they are representable as quotients of groups of $d\times
 d$ matrices over a field of order $r^e$, and then $m=de$. Now if the order of
 $T$ divides $n\in\ourset$, then $m>1$ forces $r\leq \log\log n$ and $r^m\leq
 \log n$, hence $m\leq \log\log n$, and $|T|\leq (\log n)^{O((\log \log n))}$,
 as claimed.
 
 If  $T\cong \mathrm{Alt}_k$  is alternating of order $k!/2\leq k^k=2^{k\log  k}$, 
 then the  distinguished prime power divisor is $2^{\nu_2(k!)-1}$.
 Legendre's formula \cite[Theorem 2.6.4]{moll} shows  that $\nu_2(k!)=k-s_2(k)$
 where $s_2(k)\leq \log k$ is the number of $1$'s in the $2$-adic representation
 of $k$. Since $n\in\ourset$, we have $2^{\nu_2(k!)-1}\leq \log n$, so
 $k-\log(k)-1\leq \nu_2(k!)-1\leq \log\log n$. This shows that $2^k\leq 2k\log
 n$, and so $|T|\leq (2k\log n)^{\log k}$. Note that $|T|=k!/2$ divides $n$, and
 so  Stirling's formula $\ln (k!)=k\ln k-k+O(\ln k)$ shows that $k\leq \log 2n$
 for large enough $k$. This yields $|T|\leq (\log n)^{O(\log\log n)}$, as
 claimed.
\end{proof}

\begin{proof}[Proof of Theorem~\ref{thm:iso}]
  Let $G$ be a group of  order $n\in\ourset$. By Theorem \ref{thm:split} and
  Corollary~\ref{corB}, we can construct generators and a membership test for the
  cyclic Hall $\bppi(n)$-subgroup $B$ of $G$. With this we can use Proposition
  \ref{prop:pres} to construct a complement $H$, thus $G=H\ltimes B$. The
  complexity statement follows from the results we have used.
\end{proof}


{
\bibliographystyle{alpha}

 }

\end{document}